\documentclass[11pt]{article}

\usepackage[letterpaper,
			left = 1.1truein,  
			right = 1.1truein, 
			top = 1.1truein, 
			bottom = 1.1truein]{geometry}

\usepackage{subcaption}
\usepackage{algpseudocode,amssymb,mathtools,amsmath,amsthm}
\usepackage{algorithmicx}
\usepackage{algorithm}
\usepackage{arydshln,mathabx}
\usepackage{xcolor}
\algnewcommand\algorithmicoutput{\textbf{Output:}}
\algnewcommand\Output{\item[\algorithmicoutput]}
\usepackage{bbm}
\usepackage{mathrsfs}
\allowdisplaybreaks

\usepackage[
            CJKbookmarks=true,
            bookmarksnumbered=true,
            bookmarksopen=true, 
            colorlinks=true,
            citecolor=red,
            linkcolor=blue,
            anchorcolor=red,
            urlcolor=blue
            ]{hyperref}

\newtheorem{defn}{Definition}[section]
\newtheorem{thm}{Theorem}[section]
\newtheorem{lem}{Lemma}[section]
\newtheorem{cor}{Corollary}[section]

\newtheorem{rem}{Remark}[section] 
\newcommand{\E}{\mathbb{E}}
\newcommand{\KL}{\mathrm{KL}}
\renewcommand{\P}{\mathbb{P}}
\newcommand{\M}{\mathcal{M}}
\newcommand{\pM}{M_{{\mathscr G},\varepsilon}}
\newcommand{\pW}{W_{{\mathscr G},\varepsilon}}

\usepackage{enumitem}
\setlist[enumerate]{leftmargin=.5in}
\setlist[itemize]{leftmargin=.5in}

\DeclareMathOperator*{\argmax}{arg\,max}

\usepackage[numbers, square, sort]{natbib} 
\usepackage{lipsum}

\newcommand{\R}{\mathbb R}
\newcommand{\pL}{L_{{\mathscr G},\varepsilon}}

\title{PriME: Privacy-aware Membership profile Estimation in networks}

\author{
Abhinav Chakraborty\footnote{Email: ac4662@columbia.edu} \\
\emph{Columbia University}
\and
Sayak Chatterjee\footnote{Email: sayakc@wharton.upenn.edu} \\
\emph{University of Pennsylvania}
\and
Sagnik Nandy\footnote{Email: nandy.15@osu.edu} \\
\emph{The Ohio State University}
}
\begin{document}
	\maketitle

\begin{abstract}
This paper presents a novel approach to estimating community membership probabilities for network vertices generated by the Degree Corrected Mixed Membership Stochastic Block Model \citep{JIN2023} while preserving individual edge privacy. Operating within the $\varepsilon$-edge local differential privacy framework, we introduce an optimal private algorithm based on a symmetric edge flip mechanism and spectral clustering for accurate estimation of vertex community memberships. We conduct a comprehensive analysis of the estimation risk and establish the optimality of our procedure by providing matching lower bounds to the minimax risk under privacy constraints. To validate our approach, we demonstrate its performance through numerical simulations and its practical application to real-world data. This work represents a significant step forward in balancing accurate community membership estimation with stringent privacy preservation in network data analysis. 

\end{abstract}
\section{Introduction}

Community detection in networks has found important applications in the study of social networks, image segmentation, and cell biology (see \cite{Fortunato_2010} for a survey). \emph{Stochastic Block Model} (SBM), introduced by \cite{Holland1983StochasticBF}, is a probabilistic generative model for random networks with latent community structures that provides a rigorous framework for the analysis of the performance of community recovery methods in such networks. These applications often involve working with sensitive datasets in several disciplines like healthcare, social networks, and finance, where there is an inherent risk of extracting sensitive information about individuals through adversarial queries or analysis \cite{houghton2014privacy,zheleva2011privacy,zhou2008preserving}. Given the sensitive nature of the data encountered in these applications, ensuring privacy and data protection becomes paramount. While extensive literature has been dedicated to exploring the theory and performance of various community recovery algorithms applied to Stochastic Block Models (SBM) (see \cite{JMLR:v18:16-480,10.1214/19-STS736} for comprehensive surveys), relatively fewer of these studies \cite{chen2023private,Hehir_Slavkovic_Niu_2022} have delved into this problem while considering the crucial aspect of privacy constraints.

In this paper, we study a generalized version of SBM, where nodes are allowed to have membership in multiple communities. The community membership profile of each node in the random network is characterized by a probability vector with the entries encoding the likelihood of the node belonging to the particular community. We propose a novel approach to estimating such membership profiles in random networks maintaining privacy constraints. In different practical scenarios, the assumption that nodes exclusively belong to one cluster becomes unreasonable, highlighting the need to embrace heterogeneity in community membership studied in this paper. Such modeling of heterogeneity in community membership of the network nodes through Mixed Membership Models has gained popularity in the literature recently and has been successfully used in the study of citation networks, topic modeling, image processing, and transcriptional regulation \cite{10.5555/2073876.2073918,10.5555/944919.944937,1467486,doi:10.1073/pnas.0307760101, NIPS2008_8613985e,doi:10.1080/07350015.2021.1978469} among many others. However, to the best of our knowledge, algorithms to estimate membership profiles in such models maintaining privacy constraints have not been studied before.

We consider $K$ communities $\mathcal C_1,\ldots,\mathcal C_K$ and an observed network $\mathscr G$ on $n$ vertices. The probability of each edge $(i,j) \in {\binom{n}{2}}$ being selected in the edge set $E(\mathscr G)$ of $\mathscr G$ is given by
\begin{align}
\label{eq:adj_matrix}
\P[(i,j) \in E(\mathscr G)] = \Pi^\top_{i*} B \Pi_{j*}
\end{align}
where $\Pi \in \R^{n \times K}$, $B \in \R^{K \times K}$ represent the membership and the connection probability matrix respectively. The notation $\Pi_{i*}$ refers to the $i$-th row of $\Pi$. For $i \in \{1,\ldots,n\}$ and $k \in \{1,\ldots,K\}$, the entries $\Pi_{ik}$ of $\Pi$ represent the likelihood of node $i$ belonging to community $\mathcal C_k$. The entry $B_{ab}$ for $(a,b) \in\{1,\,2,\,\ldots,\,K\}\times\{1,\,2,\,\ldots,\,K\}$ represents the probability of an edge between the clusters $\mathcal C_a$ and $\mathcal C_b$. This modeling of the random network with mixed community identity of the nodes is called \emph{Mixed Membership Stochastic Block Model} (MMSBM) and has been studied extensively in \cite{JIN2023,jin2017sharp,ke2022optimal}. For example, in \cite{JIN2023} the authors propose a spectral method called MIXED-SCORE to estimate the membership matrix $\Pi$ in a degree-corrected version of MMSBM where
\begin{equation}
\label{eq:DCMM}
  \P[(i,j) \in E(\mathscr G)] = \theta_i\theta_j\Pi^\top_{i*} B \Pi_{j*} 
\end{equation}
where the parameters $\theta_1,\,\theta_2,\,\ldots,\,\theta_n$ represent the degree heterogeneity of the nodes.  In other words, higher values of $\theta$ denote a higher number of connections incident to the node.

\begin{rem}
\label{rem1_1}
Observe that the membership profile matrix $\Pi$ is generally not identifiable in the model \eqref{eq:DCMM} under general conditions. A sufficient condition to ensure identifiability is the non-singularity of the matrix $B$, along with the requirements that $B_{ii} = 1$ for all $i \in [n]$ and that each community contains at least one pure node with a significantly large degree whose membership profile assigns full weight to the specific community. The proof of this property is provided in Proposition A.1 of \citet{JIN2023}. Even under these conditions, $\Pi$ is identifiable only up to a permutation of its columns. However, this ambiguity corresponds merely to a relabeling of the communities, which is acceptable for all practical purposes.
\end{rem}

The authors in \cite{JIN2023} consider the estimation loss given by
\begin{align}
\label{eq:loss}
  \mathcal{L}(\widehat \Pi,\Pi)=\min_{T}\bigg\{\frac{1}{n}\sum_{i=1}^{n}\|(T\widehat{\Pi})_{i*}-\Pi_{i*}\|_1\bigg\}  
\end{align}
with the minimum taken over all possible permutations of the community labels and prove that the estimation loss of MIXED-SCORE converges to $0$ at a $n^{-1/2}$ rate. 
In \citet{ke2022optimal}, the authors prove that this rate is minimax optimal using a method that combines SCORE normalization with spectral clustering. It is also worthwhile to mention that some other methods have also been considered to recover communities in MMSBM like a spectral method based on $K$-medians clustering in \cite{article_levina} and a Bayesian approach using a Dirichlet prior on the rows of $\Pi$ in \cite{NIPS2008_8613985e}.

In this paper, we focus on ensuring \emph{Local Differential Privacy} (Local DP) (see  \cite{yang2020local} for a survey) in estimating $\Pi$ from $\mathscr G$. This distinguishes itself from \emph{central Differential Privacy} (central DP) (first introduced in \cite{dwork2006differential}) by eliminating the reliance on a trusted curator, providing us with a more robust privacy guarantee, reducing the risk of data leakage and unauthorized disclosure of information. The notion of Local DP is particularly relevant to community detection problems. For example, consider a healthcare network where patient data is shared for research purposes \cite{abouelmehdi2018big}. Local DP ensures that the privacy of individual patients is maintained, preventing the leakage of sensitive medical conditions or personal information. Similarly, in social networks and financial transactions, Local DP safeguards user identities and affiliations, thwarting potential threats and breaches.

\subsection{Related Literature and our contributions}
\emph{Differential Privacy}(DP) introduced in \citet{dwork2006calibrating}, provides a rigorous framework to mathematically quantify data privacy in statistical procedures. Over the last decade, significant research has focused on strengthening the algorithmic and mathematical foundations of differential privacy \cite{dwork2011firm,dwork2014algorithmic,doi:10.1198/jasa.2009.tm08651} leading to different procedures for differentially private statistical learning \cite{2016arXiv160301887D, 10.1007/978-3-662-53641-4_24, 10.1145/3188745.3188946, 10.1111/rssb.12454,Balle2018ImprovingTG,nissim2007smooth}. 

Delineating the trade-off between privacy requirements and the accuracy of a statistical procedure is a flourishing area of research in mathematical statistics. A substantial body of literature has explored statistical procedures having the best possible accuracy (minimax optimal risk) under stringent privacy constraints \cite{doi:10.1137/15M1033587,Barber2014PrivacyAS,10.1214/21-AOS2058,avella2019differentially,kamath2022private,pmlr-v99-kamath19a}. 

Maintaining data privacy is extremely important in analyzing network data, since such relational data often reveal sensitive information about the subjects under study. In recent years, there has been a surge in the literature focused on rigorously formulating the idea of \emph{differential privacy} in the context of graphical data \cite{nissim2007smooth,kasiviswanathan2013analyzing,nguyenvulli,guo2023privacy,xu2023binary} and developing private estimation procedures for parameters such as centrality measures, clustering coefficients, and other network structures \cite{wang2013learning, LAEUCHLI2022126546,ning2021benchmarking,houghton2014privacy,zheleva2011privacy,borgs2015private,borgs2018revealing}. We refer to \citet{mueller2022sok} for a survey of related work. 

The development of private procedures for recovering the latent community structure in an observed network utilizing the density of connections between different vertices of the network is an important area of research. The existing methods to privately recover such underlying communities in graphical data fall into three main categories: perturbing non-private estimators of community labels by adding random noise, Bayesian sampling mechanisms, and adding noise to the network itself before applying spectral clustering to recover the latent communities. We refer to \citet{chen2023private,mohamed2022differentially,guo2023privacy,Hehir2021ConsistencyOP} for a comprehensive survey of these methods and their theoretical properties. 

Though previous research has explored the intersection of Stochastic Block Models (SBM) and privacy concerns, such investigations have not addressed the unique problem of recovering community membership profiles in mixed membership models while adhering to the Local DP constraints.
In this paper, we consider the setting defined by \eqref{eq:DCMM}, where the network vertices belong to multiple communities. We characterize the fundamental tradeoff between privacy constraints and the statistical utility of any algorithm used to infer the membership profile matrix $\Pi$ from privatized data by establishing the minimax estimation rate of the parameter $\Pi$ under the loss $\mathcal{L}$ defined in \eqref{eq:loss}.

Additionally, we propose a specific mechanism to modify the observed data to ensure $\varepsilon$-local differential privacy and introduce a spectral algorithm, PriME, that estimates the membership profile matrix $\Pi$ while adhering to user-specified local differential privacy constraints. We demonstrate that, under the loss $\mathcal{L}$, the estimator produced by PriME is consistent in estimating $\Pi$ and achieves the optimal minimax rate. To the best of our knowledge, this work is the first to establish the fundamental tradeoff between statistical accuracy and privacy in the context of community detection in networks with mixed membership profiles across multiple communities.

This uncharted territory underscores the novelty and significance of our work in developing algorithms at the intersection of mixed membership modeling of networks with latent communities and privacy protection.

On the technical front, we utilize information processing inequalities to characterize the minimax lower bound for estimating $\Pi$ from privatized data. Our analysis departs significantly from traditional techniques used to establish minimax lower bounds in mixed membership problems, as our lower bound considers not only the class of estimators but also all possible mechanisms that ensure $\varepsilon$-local differential privacy. As a result, classical information processing inequalities, such as Fano's inequality, fail to capture the interplay between privacy constraints and statistical accuracy.

To address this, we employ stronger information processing inequalities adapted from \citet{6686179}, enabling us to provide lower bounds over both the class of estimators and the mechanisms that ensure $\varepsilon$-local differential privacy. Moreover, we demonstrate that the stated lower bound can be achieved using a simple privatization mechanism coupled with a spectral estimator. 

It is worth emphasizing that enforcing local differential privacy at a user-specified level substantially alters the degree heterogeneity in the original network. Consequently, the dependence of the minimax rate on the degree distribution departs significantly from the non-private setup studied in \cite{JIN2023,ke2022optimal}. In particular, the minimax-optimal algorithm proposed in \cite{ke2022optimal} stops to be optimal in this regime, owing to its reliance on pre-PCA normalization of node embeddings based on the graph Laplacian. When the degrees are perturbed by edge-level privacy noise, such normalization introduces additional variance, leading to suboptimal estimation. To address this issue, we employ a modified version of the \textsc{MIXED-SCORE} algorithm of \cite{JIN2023}, incorporating privacy-aware adjustments. Through a careful spectral perturbation analysis based on the leave-one-out technique of \cite{ke2022optimal}, we establish that the modified \textsc{Mixed-Score} estimator attains the minimax-optimal rate for estimating the membership profile matrix~$\Pi$ under user-specified local differential privacy constraints.

\subsection{Paper Organization} 
\textcolor{black}{In Section~\ref{sec:price}, we establish the information-theoretic price of privacy for estimating $\Pi$ from a random network generated under the model in \eqref{eq:DCMM}, while adhering to edge-level local differential privacy constraints. Specifically, we derive minimax lower bounds for the estimation error of $\Pi$ under the loss function $\mathcal{L}$ defined in \eqref{eq:loss}. In Section~\ref{prime}, we present a mechanism for releasing data that ensures local differential privacy. Additionally, we propose a spectral algorithm that operates on the privatized data matrix to estimate $\Pi$, achieving the optimal accuracy characterized in Section~\ref{sec:price}.} 

\textcolor{black}{In Section~\ref{numerical}, we support our theoretical findings with simulation studies. Finally, in Section~\ref{real_data}, we demonstrate the effectiveness of our proposed algorithm on two real-world datasets: the Facebook Ego Network and the Political Blog Dataset.
}

\subsection{Notations and Assumptions}
In order to facilitate our discussion throughout this paper, we introduce some notation. Firstly, we shall denote the set $\{1,\ldots,m\}$ as $[m]$. Secondly, for any given matrix $A$, we will denote its spectral norm as $\|A\|$ and the Frobenius norm as $\|A\|_F$. The $i$-th row and $j$-th column of matrix $A$ will be denoted by $A_{i*}$ and $A_{*j}$ respectively. Furthermore, the matrix norm $\|A\|_{2 \rightarrow \infty}:=\max_{1 \le i \le n}\|A_{i*}\|_2$. For any vector $v,$ its $L^p$ norm will be denoted by $\|v\|_p$ for $p\ge 1.$ For any two sequences $a_n$ and $b_n,$ $a_n\asymp b_n$ means there exists constants $C_1,\,C_2>0$ such that $a_n\le C_1b_n$ and $b_n\le C_2a_n$ for all sufficiently large $n,$ and $a_n\lesssim b_n$ means there exists constant $C>0$ such that $a_n\le Cb_n$ for all sufficiently large $n.$ The Kullback-Leibler divergence between two probability distributions $P$ and $Q$ will be denoted by $\KL(P\|Q)$ and the total variation distance between them shall be denoted by $\|P-Q\|_{\mathrm{TV}}$. The vector of all ones in $K$ dimension will be denoted by $1_K$. Next, the adjacency matrix of a graph $\mathscr G$ will be denoted by $A_{\mathscr G}$ and we shall assume that $(A_{\mathscr G})_{ii}=0$ for all $i \in [n]$.


\section{Balancing Privacy and Statistical Utility}
\label{sec:price}
\subsection{Local Differential Privacy and Problem Statement}
In the context of privacy-preserving analysis of network data, two primary notions of local privacy are widely studied: \emph{node Local Differential Privacy} (node LDP) \cite{kasiviswanathan2013analyzing} and \emph{edge Local Differential Privacy} (edge LDP) \cite{nissim2007smooth}. While both aim to protect individuals' privacy in graph-structured data, they operate at different levels of granularity and offer distinct privacy guarantees. In this paper, we focus primarily on edge Local Differential Privacy (edge LDP) as formalized in \cite{imola2021locally}.

Let $\mathcal{A}$ denote the space of adjacency matrices on $n$ vertices. Consider randomized functions $\mathcal{M}_{ij}: \{0,1\} \to \mathbb{R}$ acting on the $(i,j)$-th entry of $A_{\mathscr{G}}$, and define the function $\mathcal{M}:\mathcal{A} \to \mathbb{R}^{n \times n}$ entrywise as $(\mathcal{M}(A_{\mathscr{G}}))_{ij} = \mathcal{M}_{ij}((A_{\mathscr{G}})_{ij})$ for $(i,j) \in [n] \times [n]$. 

\begin{defn}[$\varepsilon$-edge LDP]
\label{def:edge_ldp}
A mechanism $\mathcal{M}$ satisfies $\varepsilon$-edge Local Differential Privacy if, for any $A_{\mathscr{G}}, A_{\mathscr{G}'} \in \mathcal{A}$ such that ${\mathscr{G}}$ and ${\mathscr{G}'}$ differ by exactly one edge, we have:
\[
\frac{\P(\mathcal{M}(A_{\mathscr{G}}) \mid A_{\mathscr{G}})}{\P(\mathcal{M}(A_{\mathscr{G}'}) \mid A_{\mathscr{G}'})} \leq e^\varepsilon.
\]
\end{defn}

\textcolor{black}{
In this paper, we aim to characterize the information-theoretic bottlenecks in designing mechanisms $\mathcal{M}$ that ensure local $\varepsilon$-edge differential privacy while enabling accurate estimation of latent membership profiles at later stages of data analysis.
}

\textcolor{black}{
Our primary focus lies in the design of privatizing protocols. Specifically, we assume access to the raw data $\mathscr{G}$ and are tasked with releasing a function of this data, $\mathcal{M}(A_{\mathscr{G}})$, by adding appropriate noise to ensure that the released dataset satisfies the conditions of Definition \ref{def:edge_ldp}. Furthermore, the mechanism $\mathcal{M}$ must be designed such that natural estimators can be employed by a statistician at a later stage to infer the unknown membership profile matrix $\Pi$.
}

\textcolor{black}{
A related but distinct problem of significant statistical interest involves analyzing network data with adversarial perturbations. In this scenario, the problem is approached from the perspective of a statistician who receives a dataset modified to ensure privacy, without knowledge of the exact privatizing protocol. The analyst must then attempt to estimate the matrix $\Pi$. As demonstrated in \citet{9996945}, consistent estimation of $\Pi$ may not be feasible in the absence of knowledge about $\varepsilon$ and the mechanism used to alter the raw data. While the problem of characterizing minimax lower bounds for the risk of estimating $\Pi$ from adversarially perturbed data when the true data is generated from a mixed membership SBM is undoubtedly important, it lies beyond the scope of this work. We believe this is an interesting direction for future research but emphasize that our focus here is specifically on designing privatizing mechanisms to address the tradeoff between privacy and statistical utility.
}

\subsection{Minimax Lower Bound to the Estimation Risk under Privacy Constraint}
Understanding fundamental trade-offs between privacy constraints and statistical accuracy of any procedure to estimate $\Pi$ from a privatized dataset requires the characterization of the minimax lower bound to the estimation risk $\mathcal L(\widehat \Pi,\Pi)$. In this section, we derive a lower bound on the estimation risk for inferring the membership profile matrix $\Pi$ under $\varepsilon$-edge local differential privacy (LDP). 

In that direction, let us consider some additional assumptions on the data-generating model \eqref{eq:DCMM}. Let us assume that $\varepsilon \in [0,\varepsilon_0]$ and 
\begin{align}
    \label{eq:bounded_theta}
    \max_{i \in [n]}\theta_i \le C,
\end{align}
for some $C>0$.
Next, consider the matrix
$$
G = K \|\theta\|^{-2}_2(\Pi^\top\Theta^2\Pi).
$$
Assume that $B_{ii}=1$ for all $i \in [n]$, and
\begin{align}
\label{eq:singularity_b}
 &\min_{i\neq j}B_{ij} \ge c, \quad  \|G\| \le c_1,\quad\mbox{and} \quad \|G^{-1}\| \le c_1   
\end{align}
for some constants $c,\,c_1>0.$ These assumptions ensures that the degree distributions across different communities do not vary significantly. In the special case when all the nodes are pure and there is no degree of heterogeneity, this condition translates to the uniform distribution of community sizes.

Next, let us consider the ordered eigenvalues of $BG$ denoted by $\lambda_1(BG),\ldots,\lambda_K(BG)$ in decreasing order. We assume that there exists $c_2 \in (0,1)$ and $\beta_n \in (0,1)$ such that
\begin{align}
\label{eq:spectral_gap}
   & |\lambda_2(BG)| \le (1-c_2)\lambda_1(BG),\quad\mbox{and}\quad\max_{k\in \{2,\ldots,K\}}\lambda_k(BG) \asymp \beta_n.
\end{align}
The first inequality ensures that the leading eigenvector of the true signal matrix $\Pi^\top B\Pi$ is separated from the remainder of spectrum which in turn helps in correcting the estimator for degree heterogeneity. The second inequality in \eqref{eq:spectral_gap} is a spectral gap assumption common in the literature of spectral clustering. The term $\beta_n$ represents the signal strength which governs the rate of convergence of the spectral estimator.

There exists a constant $c_3>0$ such that
\begin{align}
\label{eq:perron_root}
\min_{1\le k\le K}\left\{\sum_{i=1}^n\theta^2_i\Pi_{ik}\right\} \ge c_3\max_{1\le k\le K}\left\{\sum_{i=1}^n\theta^2_i\Pi_{ik}\right\}.
\end{align}
If the eigenvectors of $BG$ are denoted by $\eta_1,\ldots,\eta_K$, by Proposition A.1 of \cite{JIN2023}, we can conclude using \eqref{eq:singularity_b} and \eqref{eq:perron_root} that there exists constants $c_4>0$ such that
\begin{equation}
\min_{1 \le j \le K}\,\eta_{1j}>0, \quad \mbox{and} \quad \frac{\max_{1 \le j \le K}\eta_{1j}}{\min_{1 \le j \le K}\eta_{1j}}\le c_4.
\label{eq:BG_eigenvector}
\end{equation}
We shall also assume that there exists a constant $c_5>0$ such that,
\begin{equation}
\label{eq:non_negligible_presence}
\{1 \le i \le n:\Pi_{ik}=1, \theta_i\ge c_5\widetilde \theta\}\neq \emptyset 
\end{equation}
for all $1 \le k \le K$, where the degree correction factor $\widetilde\theta$ is defined as
\begin{align}
\label{eq:def_tilde_theta_main_param}
\widetilde\theta=\|\theta\|_2/\sqrt{n}.
\end{align}
This condition ensures that each community contains at least one pure node with a significantly large degree which is sufficent for the model \eqref{eq:DCMM} to be identifiable, as discussed in Remark \ref{rem1_1}. {Furthermore, Proposition A.2 of \cite{JIN2023} provides the conditions on $\Pi$, $B$, and $\Theta$ that ensure property \eqref{eq:BG_eigenvector}.} The assumptions stated in \eqref{eq:bounded_theta}-\eqref{eq:perron_root} and \eqref{eq:non_negligible_presence} are standard regularity assumptions adopted in the non-private literature \cite{JIN2023, ke2022optimal}.

The lower bound to the estimation risk will be governed by the following parameter:
\begin{equation}
    \label{eq:err_n_old}
    \mathrm{err}_n:=\frac{K^{3/2}}{\beta_n\sqrt{n}\,\widetilde \theta^2}\left(\frac{e^\varepsilon +1}{e^\varepsilon - 1}\right), \quad \mbox{where $\widetilde \theta$ is defined in \eqref{eq:def_tilde_theta_main_param}.}
\end{equation}
Let $F_n$ be the empirical distribution function of $\{\theta_1/\widetilde\theta,\ldots,\theta_n/\widetilde\theta\}$. Let us consider the specific collection of degree heterogeneity parameters $\mathcal{G}_n(\gamma,\,a_0) \subset \R^n$ defined as follows:
\begin{defn}
\label{def:constr_deg_cor}
For constants $\rho\in(0,\,1)$ and $a_0\in(0,\,1),$ let $\mathcal{G}_n(\rho,\,a_0)$ be the collection of all $\theta\in\R^n$ such that there exists $c_n>0$ satisfying $\mathrm{err}_n \le\rho c_n$, and
$$\int_{\rho c_n}^{c_n}\frac{\text{d}F_n(t)}{t\wedge 1}\ge a_0\int_{\mathrm{err}_n}^\infty\frac{\text{d}F_n(t)}{t\wedge 1},$$
where $\mathrm{err}_n$ is defined by \eqref{eq:err_n_old}.
\end{defn}

\begin{rem}
    While this condition appears to be technical, we can modify the arguments of Lemma A.2 of \cite{ke2022optimal} slightly to show that these conditions hold for the following classes of degree distribution.
    \begin{enumerate}
    \item A finite mixture of point masses $F = \sum_{k=1}^L \epsilon_k \delta_{x_k}$, where $\min_k \epsilon_k > 0$ and $x_k > 0$ for all $k$.
    \item A continuous distribution supported on $[c,\infty)$ for some constant $c>0$.
\end{enumerate}
\end{rem}



{To construct the lower bound, let us consider $\mathfrak M_\varepsilon$, the class of all estimators of $\Pi$ that have been constructed from a dataset released by the provider after appropriately altering the observed adjacency matrix using any randomized mechanism to enforce $\varepsilon$-edge local differential privacy as defined in Definition \ref{def:edge_ldp}. The following result provides a lower bound to the minimax risk given by:
\begin{equation}  
\label{eq:def_ldp_min_max_risk}
\inf_{\widehat\Pi \in \mathfrak M_\varepsilon}\sup_{(\Pi,\,B)\in\mathcal{Q}_n(\theta)}\E\mathcal{L}(\widehat\Pi,\,\Pi),
\end{equation}
where for any sequence of degree correction parameters $\theta \in \mathcal{G}_n(\rho,\,a_0)$, the collection $\mathcal{Q}_n(\theta)$ contains the matrices $\Pi \in \R^{n \times K}$ and $B \in \R^{K \times K}$ satisfying the assumptions \eqref{eq:singularity_b}, \eqref{eq:spectral_gap}, \eqref{eq:perron_root}, and \eqref{eq:non_negligible_presence}.
\begin{rem}
Observe that, a lower bound on the risk defined by \eqref{eq:def_ldp_min_max_risk} characterizes the unavoidable error incurred in estimating $\Pi$, accounting for both the choice of the estimator and the privatizing mechanism employed to ensure local differential privacy.
\end{rem}}
\begin{figure}
    \centering
    \includegraphics[width=0.75\linewidth]{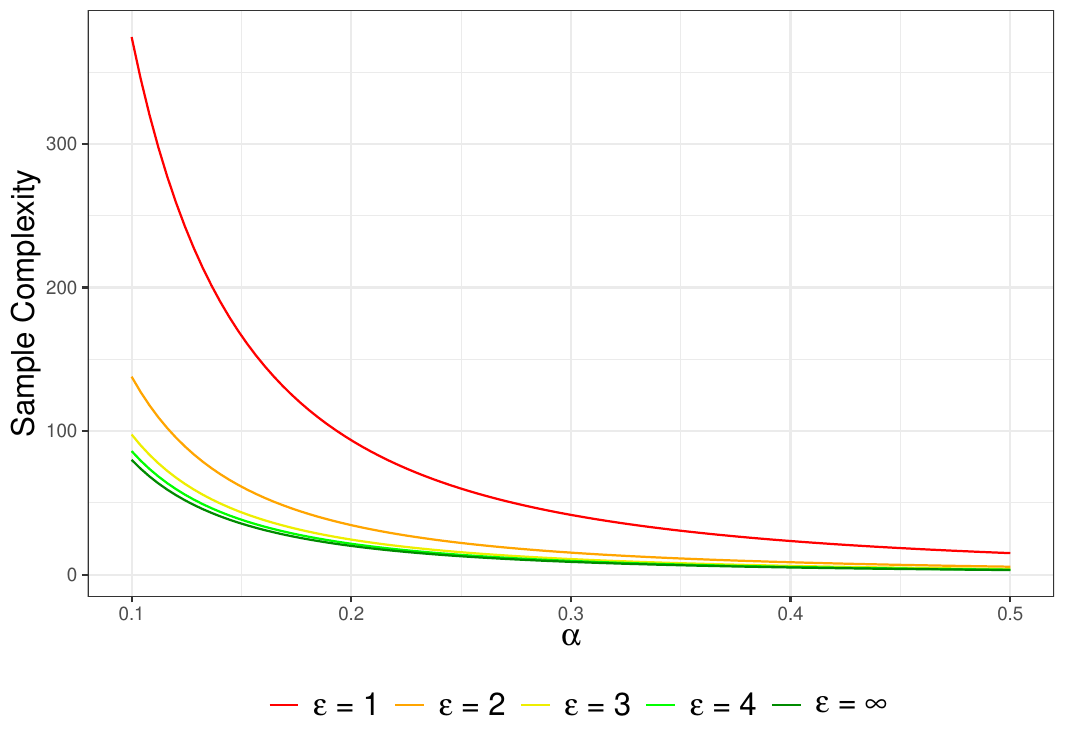}
    \caption{Sample Complexity as a function of estimation error $\alpha$ and privacy budget $\varepsilon$}
    \label{fig:samp_complex_v_eps}
\end{figure}
\begin{thm}
{Given constants $\rho\in(0,1)$ and $a_0\in(0,1),$ consider $\theta \in \mathcal G_n(\rho,a_0)$ such that $\mathrm{err}_n\to0$. Furthermore, consider an $\varepsilon \in [0,\varepsilon_0]$ for some $\varepsilon_0>0$ and $\theta_i\in[0,\,C]$ for all $1\le i\le n$ for some constant $C>0$. Then there exists a constant $C^\prime_0>0$ (independent of $\theta$, $n$ and $\varepsilon$) such that for all sufficiently large $n$, we have
\begin{align*}
\inf_{\widehat\Pi \in \mathfrak M_\varepsilon}\sup_{(\Pi,\,B)\in\mathcal{Q}_n(\theta)}\E\mathcal{L}(\widehat\Pi,\,\Pi)&\ge \frac{C^\prime_0}{n}\sum_{i=1}^n\min\left\{\frac{\mathrm{err}_n}{(\theta_i/\widetilde \theta)\wedge1},1\right\}=C_0'\int_0^\infty\min\left\{\frac{\mathrm{err}_n}{t\wedge1},\,1\right\}\text{d}F_n(t),
\end{align*}
where $F_n$ is the empirical distribution function of $\{\theta_1/\widetilde\theta,\,\theta_2/\widetilde\theta,\,\ldots,\,\theta_n/\widetilde\theta\}$  with $\widetilde \theta=\|\theta\|_2/\sqrt{n}$.
    \label{thm:lower_bound_informal}}
\end{thm}
\begin{rem}
{
We require $\mathrm{err}_n \to 0$ for Theorem~\ref{thm:lower_bound_informal} to hold, which imposes the following condition on the signal strength~$\beta_n$:
\[
\beta_n \gg \frac{1}{\sqrt{n}\,\widetilde{\theta}^2\,\tanh(\varepsilon/2)}.
\]
Compared to the analogous condition on~$\beta_n$ in Theorem~4.2 of~\citet{ke2022optimal}, enforcing differential privacy demands a stronger separation between the within-community and inter-community connection probabilities, particularly for small values of $\widetilde \theta$ which is an characteristic of a sparse network. This tightening is necessitated by the additional randomness in the degree distribution introduced through edge-level privatization. Indeed, any local edge-DP randomization amplifies uncertainty in node degrees, thereby making the estimation of the underlying degree heterogeneity substantially more difficult and worsening the dependence of the estimation rate for any estimator of~$\Pi$ on the baseline degree in the entire network. This deviation from the non-private regime is particularly pronounced for sparse networks.}
\end{rem}
{If the degree heterogeneity parameters $\theta_i$ are approximately equal to $\widetilde{\theta}$ for all $i \in [n]$, then the above theorem guarantees that to achieve an error of $\alpha \in (0,1)$, the minimum number of samples required is lower bounded by:
\[
\widetilde{\mathcal{O}}\left(\frac{K^{3}}{\beta_{n}^{2} \widetilde{\theta}^{4} \alpha^{2}} \tanh^{-2}\left(\frac{\varepsilon}{2}\right)\right),
\]
up to constants, where $\widetilde{\mathcal{O}}(\cdot)$ also suppresses logarithmic factors.. For smaller values of $\varepsilon$, this simplifies to:
\[
\widetilde{\mathcal{O}}\left(\frac{K^{3}}{\beta_{n}^{2} \widetilde{\theta}^{4} \alpha^{2} \varepsilon^2}\right),
\]
}
{Compared to the analogous lower bound in Theorem~2.1 of~\citet{ke2022optimal}, for small values of~$\varepsilon$, the effective sample size reduces to~$n\varepsilon^2$, which can be interpreted as the \emph{price of privacy}. Moreover, when the degree heterogeneity parameter~$\widetilde{\theta}$ is small, the private setting requires $\mathcal O(\widetilde{\theta}^{-4})$ samples in contrast to $\mathcal O(\widetilde{\theta}^{-2})$ in the non-private case to achieve the same estimation accuracy for~$\Pi$ in the worst case. The lower bound in Theorem \ref{thm:lower_bound_informal} demonstrates that this deteriorated dependence on $\widetilde{\theta}$, relative to the non-private setting, is intrinsic. It arises from the stringent requirement of enforcing privacy at the edge level, which inevitably suppresses information about the degree heterogeneity of the underlying network in the privatized data—regardless of the specific mechanism used to ensure privacy. The dependence of the sample complexity on the estimation error~$\alpha$ is illustrated in Figure~\ref{fig:samp_complex_v_eps}. As observed, the sample complexity increases sharply for small~$\varepsilon$, but decreases rapidly as~$\varepsilon$ grows. 
}

{Let us define $\nu_i:=\theta_i/\widetilde\theta$ where $\widetilde \theta=\|\theta\|_2/\sqrt{n}$. 
The claimed lower bound in Theorem \ref{thm:lower_bound_informal} can be decomposed as
\begin{equation}
    \frac{1}{n}\sum_{i=1}^n\min\left\{\frac{\mathrm{err}_n}{\nu_i\wedge1},1\right\}=\frac{|\{i:\nu_i\le\mathrm{err_n\}}|}{n}+\frac{1}{n}\sum_{i:\nu_i>\mathrm{err}_n}\frac{\mathrm{err}_n}{\nu_i\wedge 1}
    \label{eq:lb_decomp}
\end{equation}
since we assume $\mathrm{err}_n\ll1.$ To prove the lower bound on the estimation rate of the membership profile matrix $\Pi$ under an arbitrary data privatization scheme that satisfies $\varepsilon$ local edge differential privacy as defined in Definition \ref{def:edge_ldp}, we shall consider two separate cases depending on which of the term in \eqref{eq:lb_decomp} dominates the other. In that direction, as the first regime, we shall consider the degree heterogeneity parameters $\theta_1,\ldots,\theta_n$ such that
\begin{align}
\frac{\left|\{i:\nu_i \le \mathrm{err}_n\}\right|}{n} \le \frac{C}{n}\sum_{i:~\nu_i> \mathrm{err}_n}\frac{\mathrm{err}_n}{\nu_i \wedge 1},
    \label{eq:F_n-condn}
\end{align}
where $\mathrm{err}_n$ is defined in \eqref{eq:err_n_old} and $C>0$ is a $n$ independent constant.}

This condition refers to the situation when the proportion of nodes with relative average degree $\nu_i=\theta_i/\widetilde{\theta}$  less than $\mathrm{err}_n$ has a negligible impact in the loss $\mathcal L$. Therefore, for such nodes, we can afford to make some errors in estimating such membership profiles. Furthermore, for the nodes with relatively large $\theta_i/\widetilde{\theta}$, the estimation of the membership profiles will be inherently accurate due to higher degrees. This leaves us with a set of $n_0$ nodes with moderate relative average degrees for which estimating the membership profiles is the most challenging. 

{To characterize the lower bound for the minimax risk defined in \eqref{eq:def_ldp_min_max_risk}, we transform the estimation problem into a testing problem where nature selects a collection of parameters and an arbitrary mechanism to privatize the data. The user observes the privatized data and must infer the parameter configuration that generated the true dataset. }

{The core idea of the proof is to consider $n_0 \asymp n$ and construct a collection of $2^{\Theta(n_0K)}$ configurations of $\Pi$ by perturbing the membership profiles of these $n_0$ nodes. These configurations are designed to be well-separated in terms of the loss $\mathcal{L}$, while ensuring that the distance between the distributions of privatized data generated using the stated configurations, after applying the randomized mechanism, is small. Since the privatizing mechanism is unknown to the user, we need to demonstrate that this distance is uniformly small over all possible randomization mechanisms ensuring local differential privacy. Specifically, let $\mathcal{R}_\varepsilon$ denote the collection of randomized mechanisms that satisfy $\varepsilon$-edge local differential privacy for adjacency matrices. We then need to show that:
\begin{equation}
\label{eq:private_kl_1}
\sup_{\mathcal{M} \in \mathcal{R}_\varepsilon} \KL(\mathcal{M}(A_{\Pi^{(j)}})\| \mathcal{M}(A_{\Pi^{(0)}}))\le\alpha\log J,
\end{equation}
for some $\alpha \in (0,1/8)$ where $\Pi^{(0)}, \ldots, \Pi^{(J)}$ are the configurations selected by nature. The rest of the theorem follows using an application of Fano's inequality and Pinsker's inequality.}

{In contrast to the techniques used in \citet{ke2022optimal}, which rely on bounding:
\[
\KL(A_{\Pi^{(j)}} \| A_{\Pi^{(0)}}),
\]
proving an upper bound for \eqref{eq:private_kl_1} requires the use of stronger data processing inequalities adapted from \citet{6686179}. These inequalities account for the added complexity of the privatization mechanism, which introduces additional randomness to the data. The application of such techniques to establish minimax lower bounds for recovering latent communities under the constraint of local differential privacy has not been explored in the past, marking a key novelty of our work.}

When the condition \eqref{eq:F_n-condn} fails to hold, the error in estimating the membership profiles for nodes with a relative average degree below $\mathrm{err}_n$ significantly affects the total estimation error. In this case, the least favorable configurations are constructed by perturbing the membership profiles of all such nodes. The remainder of the proof follows a similar strategy to the previous case, with detailed steps provided in the appendix.


\section{PriME: Optimal Mechanism for privatizing the data and Membership Profile Estimation}
\label{prime}
\subsection{Edge Flip Mechanism to enforce Local Differential Privacy}
In this section, we describe a specific randomization scheme $\mathcal{M}$ designed to enforce edge local differential privacy in the resulting network. The core idea of this approach, commonly referred to as the \emph{randomized response mechanism}, involves generating a synthetic copy of the original network. This is achieved by constructing an adjacency matrix $\mathcal{M}_\varepsilon(A_{\mathscr{G}})$, where each edge in the original graph ${\mathscr{G}}$ is randomly flipped with a specified probability to ensure privacy at the user-specified level.
 Using the property of differential privacy, any analysis performed on the synthetic network $\mathcal{M}_\varepsilon(A_{\mathscr G})$ preserves $\varepsilon$–edge LDP. This edge-flipping approach has been utilized in previous studies. It has been interpreted under various definitions, such as (central) edge DP, edge LDP, and relationship DP. In particular, we use \emph{Symmetric Edge-Flip Mechanism} proposed in \cite{imola2021locally}, to define $\mathcal M$ in our setup.
\begin{defn}[Symmetric Edge-Flip Mechanism]
Suppose that $\varepsilon>0$. For $i \in[n]$, let $\mathcal{M}_i:\{0,1\}^n \rightarrow\{0,1\}^n$ be such that
\begin{equation}
\label{eq:sym_edge_flip}
 \left[\mathcal{M}_i(x)\right]_j \stackrel{\text { ind }}{=}
\begin{cases}
0 & \text{if } i \geq j \\
1-x_j & \text{if } i<j \text{ w.p. } \frac{1}{1+e^{\varepsilon}} \\
x_j & \text{if } i<j \text{ w.p. } \frac{e^\varepsilon}{1+e^{\varepsilon}}
\end{cases}   
\end{equation}
and let
$$
T(A_{\mathscr G})=\left[\begin{array}{c}
\mathcal{M}_1\left((A_{\mathscr G})_{1 *}\right) \\
\vdots \\
\mathcal{M}_n\left((A_{\mathscr G})_{n *}\right)
\end{array}\right].
$$
Then the $n \times n$ symmetric edge-flip mechanism is the mechanism 
\[\mathcal{M}_{\varepsilon}(A_{\mathscr G})=T(A_{\mathscr G})+[T(A_{\mathscr G})]^\top.\]
\end{defn}
\begin{thm}
    The symmetric edge-flip mechanism $\mathcal{M}_{\varepsilon}$ satisfies $\varepsilon$-edge LDP.
\end{thm}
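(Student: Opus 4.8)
The plan is to use the fact that the output of $\mathcal{M}_{\varepsilon}$ is a symmetric $\{0,1\}$-matrix with zero diagonal whose distribution factorizes over the independent coordinate flips performed in the strict upper triangle. First I would note that, by \eqref{eq:sym_edge_flip}, $[\mathcal{M}_i(x)]_j = 0$ whenever $i \ge j$, so $T(A_{\mathscr G})$ is strictly upper triangular and $\mathcal{M}_{\varepsilon}(A_{\mathscr G}) = T(A_{\mathscr G}) + [T(A_{\mathscr G})]^{\top}$ is symmetric with zero diagonal, completely determined by the entries $\{[T(A_{\mathscr G})]_{ij}\}_{i<j} = \{[\mathcal{M}_i((A_{\mathscr G})_{i*})]_j\}_{i<j}$. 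Consequently, for any symmetric $y \in \{0,1\}^{n\times n}$ with zero diagonal (the only values in the range of $\mathcal{M}_{\varepsilon}$), the event $\{\mathcal{M}_{\varepsilon}(A_{\mathscr G}) = y\}$ equals $\bigcap_{i<j}\{[\mathcal{M}_i((A_{\mathscr G})_{i*})]_j = y_{ij}\}$, and by the independence built into \eqref{eq:sym_edge_flip},
\[
\P\big(\mathcal{M}_{\varepsilon}(A_{\mathscr G}) = y\big) = \prod_{i<j} \P\big([\mathcal{M}_i((A_{\mathscr G})_{i*})]_j = y_{ij}\big).
\]

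Next I would evaluate each factor: from \eqref{eq:sym_edge_flip}, for $i<j$ the $j$-th coordinate of $\mathcal{M}_i((A_{\mathscr G})_{i*})$ equals $(A_{\mathscr G})_{ij}$ with probability $e^{\varepsilon}/(1+e^{\varepsilon})$ and equals $1-(A_{\mathscr G})_{ij}$ with probability $1/(1+e^{\varepsilon})$. Hence $\P\big([\mathcal{M}_i((A_{\mathscr G})_{i*})]_j = y_{ij}\big)$ equals $e^{\varepsilon}/(1+e^{\varepsilon})$ if $y_{ij} = (A_{\mathscr G})_{ij}$ and $1/(1+e^{\varepsilon})$ otherwise; in particular it always lies in $\big[\tfrac{1}{1+e^{\varepsilon}}, \tfrac{e^{\varepsilon}}{1+e^{\varepsilon}}\big]$. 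Now if $\mathscr G$ and $\mathscr G'$ differ in exactly one edge $(a,b)$ with $a<b$, then $(A_{\mathscr G})_{ij} = (A_{\mathscr G'})_{ij}$ for every pair $i<j$ except $(a,b)$, so in the ratio of the two product formulas every factor cancels except the $(a,b)$ one, giving
\[
\frac{\P\big(\mathcal{M}_{\varepsilon}(A_{\mathscr G}) = y\big)}{\P\big(\mathcal{M}_{\varepsilon}(A_{\mathscr G'}) = y\big)} = \frac{\P\big([\mathcal{M}_a((A_{\mathscr G})_{a*})]_b = y_{ab}\big)}{\P\big([\mathcal{M}_a((A_{\mathscr G'})_{a*})]_b = y_{ab}\big)} \le \frac{e^{\varepsilon}/(1+e^{\varepsilon})}{1/(1+e^{\varepsilon})} = e^{\varepsilon},
\]
since numerator and denominator each lie in $\{\tfrac{1}{1+e^{\varepsilon}}, \tfrac{e^{\varepsilon}}{1+e^{\varepsilon}}\}$. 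As this bound holds for every admissible $y$, the mechanism is $\varepsilon$-edge LDP.

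As for the main obstacle: there is no deep step here — the argument is essentially bookkeeping — but two points require care. The first is checking that the symmetrization $\mathcal{M}_{\varepsilon}(A_{\mathscr G}) = T(A_{\mathscr G}) + [T(A_{\mathscr G})]^{\top}$ introduces no extra randomness and no spurious coincidences; this is handled by restricting attention to outputs $y$ that are themselves symmetric with zero diagonal, which are the only values $\mathcal{M}_{\varepsilon}$ attains, so that the event $\{\mathcal{M}_{\varepsilon}(A_{\mathscr G})=y\}$ is precisely the intersection of the strict-upper-triangular coordinate events. The second is using the hypothesis correctly: a single edge $(a,b)$ of the undirected graph $\mathscr G$ corresponds to exactly one coordinate $(a,b)$ with $a<b$ in the strict upper triangle, so exactly one factor in the product changes and all others cancel. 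If one prefers the measure-theoretic form of the LDP ratio (Radon–Nikodym derivatives), the identical computation applies verbatim because the output space is finite.
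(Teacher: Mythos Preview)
Your argument is correct: the factorization over the independent upper-triangular flips, the identification of the output event with the intersection of coordinate events, and the single-factor ratio bound are all valid and constitute the standard direct verification of $\varepsilon$-edge LDP for randomized response. The paper itself does not spell out a proof at all but simply invokes Theorem~3.6 of \cite{Hehir2021ConsistencyOP}; your write-up is essentially what that cited result unpacks to, so the approaches coincide in substance even though you supply the details the paper omits.
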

The proof of this theorem follows from Theorem 3.6 of \cite{Hehir2021ConsistencyOP}. 

{One approach to release data while ensuring edge local differential privacy is to provide the matrix $\mathcal{M}_{\varepsilon}(A_{\mathscr{G}})$ to the practitioner. However, note that:
\[
\E[\mathcal{M}_\varepsilon(A_{\mathscr{G}})] = p_\varepsilon(1_n1^\top_n - I_n) + (1 - 2p_\varepsilon)(\Omega - \mathsf{diag}(\Omega)),
\]
where $\Omega \equiv \Theta \Pi B \Pi^\top \Theta$, $\Theta \in \mathbb{R}^{n \times n}$ is given by $\Theta = \mathsf{diag}(\theta_1, \ldots, \theta_n)$, and $p_\varepsilon = (1 + e^\varepsilon)^{-1}$. The randomized response mechanism introduces a high-rank bias $p_\varepsilon(1_n1^\top_n - I_n)$ in $\mathcal{M}_\varepsilon(A_{\mathscr{G}})$. For very small values of $\varepsilon$, this bias can completely obfuscate the low-rank signal $\Omega - \mathsf{diag}(\Omega)$, rendering consistent estimation of $\Pi$ infeasible with natural estimators applied to $\mathcal{M}_\varepsilon(A_{\mathscr{G}})$.}

{To address this, we propose de-biasing and rescaling $\mathcal{M}_\varepsilon(A_{\mathscr{G}})$ before releasing the data to the statistician. Specifically, we recommend releasing the centered and scaled matrix $\pM$, defined as:
\begin{align}
\label{eq:def_pm}
\pM = \frac{1}{1 - 2p_\varepsilon}\bigg(\mathcal{M}_\varepsilon(A_{\mathscr{G}}) - p_\varepsilon(1_n1^\top_n - I_n)\bigg).
\end{align}
By the composition properties of differential privacy, the matrix $\pM$ also satisfies $\varepsilon$-edge local differential privacy. Importantly, the transformed matrix satisfies:
\[
\E[\pM] \equiv \Omega - \mathsf{diag}(\Omega),
\]
allowing the application of natural spectral algorithms for estimating membership profiles in the mixed membership SBM. This transformation obfuscates individual edge information to protect the privacy of connections while preserving the essential low-rank signal $\Pi B \Pi^\top$ for the analyst.}

{However, it is also worth noting that such randomization substantially alters the degree distribution of the nodes in the underlying graph. In particular, the variance of the average degree of node~$i$ increases from~$\mathcal O(\theta_i)$ to~$\mathcal O(1/\varepsilon^2)$, making the estimation of~$\theta_i$ considerably more challenging. This, in turn, affects the performance of any estimator that relies on the estimated~$\theta_i$ to correct for degree heterogeneity. Nevertheless, as discussed in the previous section, this degradation in performance for estimating the degree parameters~$\{\theta_i:i \in [n]\}$ is intrinsic to the privacy-constrained setting and not an artifact of our proposed procedure. In fact, we show that there exists a simple spectral algorithm capable of leveraging the privatized adjacency matrix to achieve the minimax-optimal rate for estimating the membership profile matrix~$\Pi$. This analysis highlights that the decision-theoretic behavior of the private membership estimation problem is fundamentally distinct from its non-private counterpart.
}

\begin{rem}
{It is worth noting that the privatized data matrix $\pM$ released to the analyst after the randomization scheme is no longer an adjacency matrix of a network. A natural question that arises is whether it is possible to transform $\pM$ into an adjacency matrix of another network without compromising the signal in the data. At present, we do not have a satisfactory solution to this question, and we leave it as an open problem for future research.}
\end{rem}

\subsection{Estimation of the Community Membership Profiles}
\label{def_alg_terms}
Let us consider the centered and scaled matrix $\pM$ defined in \eqref{eq:def_pm}. In this section, we shall provide a spectral algorithm that can be used by a statistician to estimate $\Pi$ from $\pM$ with the optimal accuracy stated in Theorem \ref{thm:lower_bound_informal}.

{In this direction, we employ the post-PCA SCORE normalization~\cite{article_score} to estimate~$\Pi$ from the privatized matrix~$\pM$. The post-PCA normalization mitigates the effect of high degree heterogeneity and enables optimal recovery of~$\Pi$. In the non-private setting, however, an additional pre-PCA normalization is typically applied to correct for extremely low-degree nodes~\cite{ke2022optimal}. Under the local-DP model, the increased variance of node degrees induced by edge-level randomization renders such pre-PCA normalization ineffective, and, in fact, detrimental to the estimation accuracy. Consequently, our algorithm operates directly on the privatized matrix~$\pM$ without any pre-PCA adjustment.
}

To motivate our algorithm, consider the low-rank matrix $\Omega=\Theta\Pi B\Pi^\top\Theta$. Let $\Xi \in \R^{n \times K}$ denote the matrix of its eigenvectors and $\lambda_1,\ldots,\lambda_K$ denote its eigenvalues. Then by definition, there exists a non-singular matrix $V \in \R^{K \times K}$ such that $\Xi=\Theta\Pi V$. We apply the post-PCA normalization developed in \citet{article_score} to $\Xi$ and construct the matrix $R \in \R^{n \times (K-1)}$ where $R_{ij} = (\Xi)_{i(j+1)}/(\Xi)_{i1}$, for $i \in \{1,\ldots,n\}$ and $j \in \{1,\ldots,K-1\}$. Using Lemma 2.1 of \citet{ke2022optimal}, the rows of $\widetilde{V} \in \R^{(K-1)\times K}$, given by $\widetilde{V}_{ij} = V_{(i+1)j}/V_{1j}$
represent the vertices of a simplex and all the rows of $R$ lie in the interior of the simplex. Furthermore, the entries of the first row of $V$ can be explicitly characterized as follows:
$$
V_{1j} = (\lambda_1+\widetilde{V}^\top_{j*}\mathsf{diag}(\lambda_2,\ldots,\lambda_K)\widetilde{V}_{j*})^{-1/2}.
$$
Observe that the rows of $R$ corresponding to the pure nodes are equal to the rows of $\widetilde V$ and such rows can be identified by running a convex hull vertex finding algorithm on $R$. The rows of $R$ satisfy the linear relation $R=W \widetilde V$, where 
\begin{equation}
\label{eq:omega_deg}
    w_{ij}=\frac{V_{1j} \Pi_{ij}}{\sum_{i=1}^{K}V_{1j} \Pi_{ij}}.
\end{equation} 
If we have prior knowledge of $R$, following the lead of \citet{JIN2023}, we can recover $\Pi$ by solving the linear system $R=W \widetilde V$ to obtain $W$. Then we can apply the reverse transformation on $W$ to recover $\Pi$. In the absence of the prior knowledge of $\Xi$ and hence of $R$, we substitute $\Omega$ by its population counterpart $\pM$ after adjusting for the fact that the diagonals of the matrices are enforced to be zero. Let us consider the leading $K$ eigenvectors of $\pM$ given by $\widehat{\Xi} \in \R^{n \times K}$ and their corresponding eigenvalues $\widehat \lambda_1,\ldots,\widehat \lambda_K$. Now, we define the matrix $\widehat R \in \R^{n \times (K-1)}$ as follows:
\begin{align}
\label{eq:score_norm_eig_vect}
\widehat R_{ij} = \widehat \Xi_{i(j+1)}/\widehat \Xi_{i1}, \qquad \mbox{for $1 \le i \le n$ and $1 \le j \le K-1$.}
\end{align}

\begin{algorithm}[t]
\caption{PriME}\label{alg:priv_mix_det}
\begin{algorithmic}[1]
\Require The privatized adjacency matrix $\mathcal M_\varepsilon(A_{\mathscr G})$; number of communities $K$; the tuning parameters $c$ and $\gamma$.
\State Define: $\pM \leftarrow \frac{1}{1-2p_\varepsilon}\bigg(\M_\varepsilon(A_{\mathscr G})-p_\varepsilon(1_n1^\top_n-I_n)\bigg)$ where $p_\varepsilon=(1+e^\varepsilon)^{-1}$.
\State Set: $d_i \leftarrow \sum_{j \neq i}(\pM)_{ij}$ for $i=1,\ldots,n$ and $\widetilde d^2=\frac{1}{n}\sum_{i=1}^{n}d^2_i$.
\State Compute the $K$ largest eigenvectors of $\pM$. Denote it by $\widehat \Xi$. Let us corresponding eigenvalues be $\widehat \lambda_1,\ldots,\widehat \lambda_K$ in decreasing order.
\State Set $\widehat S = \{i:|\widehat \Xi_{i1}| \ge ~\frac{c\sqrt{\log n}}{(1-2p_\varepsilon)|\widehat{\lambda}_K|}\}$, and $\widehat S_\gamma = \{i \in \widehat S:|\widehat \Xi_{i1}| \ge \gamma\}$.
\State Set $\widehat R_{ij} \leftarrow \widehat \Xi_{i(j+1)}/\widehat \Xi_{i1}$, for $1 \le i \le n$ and $1 \le j \le K-1$.
\State Compute the approximate simplex vertices $\widehat v_2,\ldots, \widehat v_{K}$ by using the Sketched Vertex Search Algorithm on the set of vectors $\{\widehat R_{i*}: i \in\widehat S_\gamma\}$.
\State For $i \notin \widehat S$, set $\widehat \Pi_{i*} \leftarrow \frac{1}{K}1_K$.
\State For $i \in \widehat S$, solve the system of linear equations $\sum_{k=1}^{K}\widehat w_{ik}\widehat v_k=\widehat r_i$  and $\sum_{k=1}^{K}\widehat w_{ik}=1$ to get $(\widehat w_{i1},\ldots,\widehat w_{iK})$.
\State Compute $\widehat{v}_{1j} \leftarrow (\widehat\lambda_1+\widehat{v}^\top_{j}\mathsf{diag}(\widehat\lambda_2,\ldots,\widehat\lambda_K)\widehat{v}_{j})^{-1/2}$ for $j=1,\ldots,K$.
\State Set $\widetilde \Pi_{ik} \leftarrow \max\{\widehat w_{ik}/\widehat{v}_{1k},0\}$, for $1 \le i \le n$ and $1 \le k \le K$.
\State For all $i \in \widehat S$, estimate $\widehat{\Pi}_{i*} \leftarrow \frac{\widetilde \Pi_{i*}}{\|\widetilde \Pi_{i*}\|_1}$.
\Output Estimate of the mixing proportions: $\widehat \Pi$.
\end{algorithmic}
\end{algorithm}

It is worthwhile to mention here that before the pre- and post-PCA normalization, we remove the extremely low-degree nodes from the graph since the information from these nodes is very noisy. In other words, for our analysis, we only consider the nodes in $\widehat S$ where
\begin{equation}
\label{eq:truncation_1}
   {\widehat S = \left\{i:|\widehat \Xi_{i1}| \ge c\frac{(\log n)}{(1-2p_\varepsilon)~|\widehat{\lambda}_K|}\right\}.}
\end{equation}
for a user specified constant $c>0$.
\begin{rem}
    From the construction of $R$, one can observe that the eigenvectors corresponding to the eigenvalues $\lambda_2,\ldots,\lambda_K$ are used for the spectral algorithm whereas the eigenvector corresponding to $\lambda_1$ is mainly used to correct for the effect of $\Theta$. The truncation in \eqref{eq:truncation_1} ensures that the nodes used in the analysis have degrees comparable to the spectral gap of the eigenvalues used in the downstream spectral algorithm. If the degrees are very small compared to the stated spectral gap then their effect cannot be inferred by any spectral algorithm.
\end{rem}

Next, observe that the rows of $\widehat{R}$ corresponding to the pure nodes may not represent the vertices of a simplex because of the noise introduced during the estimation of $\Xi$ by $\widehat \Xi$. Consequently, such vertices cannot be found by running a convex hull vertex search algorithm on $\widehat R$. To address this issue, we employ the \emph{Sketched Vertex Search Algorithm} proposed in Lemma 3.4 of \citet{JIN2023} to approximately identify the rows of $\widehat R$ corresponding to the pure nodes. Further, for the Sketched Vertex Search Algorithm, we only consider the nodes in the set $\widehat S_{\gamma}$ given by
\begin{equation}
    \label{eq:truncation_2}
    \widehat S_\gamma = \{i \in \widehat S:|\widehat{\Xi}_{i1}| \ge \gamma\}
\end{equation}
where $\gamma>0$ is a user-specified constant.

After approximately identifying the rows of $\widehat R$ corresponding to the pure nodes, we can proceed as in the Oracle case to estimate $\Pi$. The detailed procedure for constructing the estimator $\widehat\Pi$ of $\Pi$ is given by Algorithm \ref{alg:priv_mix_det}.

The whole algorithm can be executed in $O(n^{K+1}K^3)$ iterations. This involves the $O(n^{K+1})$ iterations for the \emph{Sketched Vertex Search} and $O(K^3)$ iterations to solve the quadratic optimization which constitutes the two principal steps of the \emph{Sketched Vertex Hunting} procedure.

\subsection{Analysis of Estimation risk of PriME}
In this section, we focus on characterizing the estimation risk of the PriME Algorithm and show that it is rate optimal. In that direction, let us recall the assumptions on the generative model stated in \eqref{eq:singularity_b}, \eqref{eq:spectral_gap}, \eqref{eq:perron_root}, and \eqref{eq:non_negligible_presence}. 


 Under the above assumptions, we can establish the following theorem, which characterizes the convergence rate of $\widehat \Pi$ to $\Pi$ under the loss $\mathcal L$ defined in \eqref{eq:loss}. Let us recall $\mathrm{err}_n$ defined in \eqref{eq:err_n_old}.
\begin{thm}
    \label{thm:upper_bound_informal}
Assume that $\log(n)\mathrm{err}_n \rightarrow 0$ as $n \rightarrow \infty$. Furthermore, also assume \eqref{eq:singularity_b}, \eqref{eq:spectral_gap}, \eqref{eq:perron_root}, and \eqref{eq:non_negligible_presence}. Under these assumptions, if $\theta_i\in[0,\,C]$ for all $1\le i\le n$ for some $C>0$ and $\varepsilon \in [0,\varepsilon_0]$ for some $\varepsilon_0>0$, 
then there exists an absolute constant $C_0>0$ (possibly depending on $K$) such that with probability $1-o(n^{-2})$, the estimate $\widehat{\Pi}$ satisfies
\begin{align*}
\mathcal{L}(\widehat \Pi,\Pi) \le \frac{C_0}{n}\sum_{i=1}^n\min\left\{\frac{\mathrm{err}_n\,(\log n)}{(\theta_i/\widetilde\theta)\wedge  1 },1\right\}=C_0\int_0^\infty\min\left\{\frac{\mathrm{err}_n\,(\log n)}{t\wedge1},\,1\right\}\text{d}F_n(t),
\end{align*}
where $F_n$ is the empirical distribution function of $\{\theta_1/\widetilde\theta,\,\theta_2/\widetilde\theta,\,\ldots,\,\theta_n/\widetilde\theta\}$  with $\widetilde \theta=\|\theta\|_2/\sqrt{n}$.
\end{thm}
\begin{rem}
    Since the privatized adjacency matrix~$\mathcal{M}_\varepsilon(A_{\mathcal{G}})$ is appropriately de-biased and scaled so that its expectation aligns with the true signal matrix~$\Omega$, the estimator~$\widehat{\Pi}$ can be analyzed under the same set of assumptions as in~\citet{JIN2023}. However, the randomization mechanism inflates the variance of each entry from~$O(\theta_i\theta_j)$ to~$O(1/\varepsilon^2)$, thereby altering the dependence of the rate on the degree distribution relative to the non-private setting. As established in Theorem~\ref{thm:lower_bound_informal}, this slower convergence rate—particularly when~$\widetilde{\theta}$ is close to zero, is unavoidable under local edge differential privacy. Consequently, the proposed privatization mechanism and the resulting estimator are minimax rate-optimal, up to a $\log n$ factor.
\end{rem}
\begin{rem}
   It is also worth noting that while the lower bound requires the degree heterogeneity parameters $\theta_1,\ldots,\theta_n$ to satisfy the conditions of Definition~\ref{def:constr_deg_cor}, the upper bound remains valid even when these conditions are violated, provided that the identifiability assumptions \eqref{eq:singularity_b}, \eqref{eq:spectral_gap}, \eqref{eq:perron_root}, and \eqref{eq:non_negligible_presence} hold. Consequently, the convergence guarantees for the PriME estimates extend to a broader class of networks exhibiting diverse patterns of degree heterogeneity and varying levels of sparsity. However, for networks with highly unbalanced structures—such as star-like graphs with a few influential hub nodes and many low-degree followers (these are precisely the networks where the conditions of Definition~\ref{def:constr_deg_cor} are violated)—the estimation rate of any spectral method may fail to achieve the minimax optimal rate. This phenomenon is also observed in the non-private counterpart.
\end{rem}


\subsubsection{Proof of Theorem \ref{thm:upper_bound_informal}}
To prove Theorem \ref{thm:upper_bound_informal}, we begin by showing that the sample eigenvectors $\widehat{\Xi}$ of $\pM$ concentrate around $\Xi$, the eigenvectors of $\Omega-\mathsf{diag}(\Omega)$ with high probability, where $\Omega=\Theta\Pi B \Pi^\top \Theta$. In that direction, let us consider the following theorem.

\begin{thm}
\label{thm:l_infty_pertb}
Suppose that $\theta_i\in[0,\,C]$ for all $i \in [n]$ for some $C>0$ and $\varepsilon \in [0, \varepsilon_0]$ for some $\varepsilon_0>0.$ Under Assumptions~\eqref{eq:singularity_b}, \eqref{eq:spectral_gap}, \eqref{eq:perron_root}, \eqref{eq:non_negligible_presence}, if
\[
\frac{K^3(\log n)^2}{\beta^2_nn\,\widetilde \theta^4}\left(\frac{e^\varepsilon +1}{e^\varepsilon - 1}\right)^2 \ll 1, \qquad \mbox{for all large values of $n$,}
\]
then there exist $\omega \in \{-1, 1\}$ and an orthogonal matrix $O \in \R^{(K-1) \times (K-1)}$ such that the following holds with probability greater than $1-o(n^{-2})$:
\[
\big|\widehat\Xi_{i1}-\omega\,\Xi_{i1}\big|
\;\lesssim\;
\frac{K\,(\log n)}{(1-2p_\varepsilon)\beta_n\|\theta\|^2_2}\left(1+\frac{\theta_i}{\widetilde \theta}\right),
\]
and
\[
\big\|\widehat\Xi_{i,2:K}-\Xi_{i,2:K}O\big\|_2
\;\lesssim\;
\frac{K^{3/2}\,(\log n)}{(1-2p_\varepsilon)\beta_n\|\theta\|^2_2}\left(1+\frac{\theta_i}{\widetilde \theta}\right),
\]
where $\widetilde{\theta}=\|\theta\|_2/\sqrt{n}$.
\end{thm}
To establish the theorem, we adapt the leave-one-out analysis developed in \citet{ke2022optimal} to our setting. 
The detailed proof is provided in the Supplementary Material. 
We now present a corollary of Theorem~\ref{thm:l_infty_pertb}, which demonstrates that the estimated 
SCORE-normalized eigenvectors $\widehat R$ of the perturbed matrix $\pM$ concentrate around the 
SCORE-normalized eigenvectors $R$ of the population matrix $\Omega - \mathsf{diag}(\Omega)$. 
Specifically, we focus on the subset of nodes whose expected degree parameters $\theta_i$ are sufficiently large. 
For a fixed constant $c_0 > 0$, define
\begin{equation}
    S_n(c_0) = \left\{ i : \theta_i \ge c_0 \, 
    \frac{K\,(\log n)}{(1 - 2p_\varepsilon)\,\beta_n\,\|\theta\|_2} \right\}.
\end{equation}
Note that $S_n(c_0)$ serves as the population analogue of the empirical set $\widehat S$ introduced in 
Algorithm~\ref{alg:priv_mix_det}. 
We are now ready to state the following corollary.

\begin{cor}
\label{cor:concentration_of_R}
    Under the assumptions of Theorem \ref{thm:l_infty_pertb}, for all $i \in S_n(c_0)$ and an orthogonal matrix $O_1 \in \R^{K-1 \times K-1}$, with probability $1-o(n^{-3})$ we have
    $$
    \|O_1\widehat R_{i*}-R_{i*}\|_2 \lesssim \frac{K^{3/2}\,(\log n)}{(1-2p_\varepsilon)\beta_n\sqrt{n}~\widetilde{\theta}(\widetilde{\theta} \wedge \theta_i )}, \quad \mbox{where $\widetilde{\theta}=\|\theta\|_2/\sqrt{n}$.}
    $$
\end{cor}
Now, let us consider the proof of Theorem \ref{thm:upper_bound_informal}.
\paragraph{Proof of Theorem \ref{thm:upper_bound_informal}}
For $i\notin S_n(c_0)$, we take the trivial estimator $\widehat\pi_i=\frac{1}{K}1_K$ and hence the loss is trivially bounded by a constant. Therefore, we focus on any $i \in S_n(c_0)$. Let us recall that the Sketched Vertex Hunting algorithm operates on vertices of $\widehat S_n(c_0,\gamma)$ which is an approximation to the set 
\[
S_n(c_0,\gamma) = S_n(c_0) \cap \left\{i:\theta_i \ge \gamma \widetilde{\theta}\right\} \quad \mbox{where $\widetilde{\theta}=\|\theta\|_2/\sqrt{n}$.}
\]
Therefore, using the property of Sketched Vertex Hunting algorithm (Lemma E.3 of \cite{JIN2023}) and Corollary \ref{cor:concentration_of_R}, we get a permutation matrix $\mathsf P \in \R^{K \times K}$ such that for an orthogonal matrix $O_1$
\begin{align*}
\|\mathsf P\widehat VO-V\|_{2 \rightarrow \infty} &\lesssim \max_{i \in S_n(c_0,\gamma)} \|O_1\hat{R}_{i*}-R_{i*}\|_2\\
& \lesssim_\gamma \frac{K^{3/2}\sqrt{n\,(\log n)^2}}{(1-2p_\varepsilon)\beta_n\|\theta\|^2_2} \ll 1,
\end{align*}
where $V \in \R^{K \times K}$ are the vertices of simplex enclosing the true $R$ as defined in Section \ref{def_alg_terms} and $\widehat{V}$ are their empirical analogues constructed using vertex hunting. 
Proceeding as in the proof of Theorem 4.2 of \cite{ke2022optimal}, we can show using the above bound that with probability greater than $1-o(n^{-2})$, we have a permutation matrix $T \in \R^{K \times K}$ such that
$$
\|T\widehat\pi_i -\pi_i\|_1 \lesssim \min\left\{\frac{K^{3/2}\,(\log n)}{(1-2p_\varepsilon)\beta_n\sqrt{n}~\widetilde{\theta}(\widetilde{\theta} \wedge \theta_i )},1\right\},
$$
simultaneously for all $i \in [n]$. Finally, using the definition of $\mathcal L(\cdot,\cdot)$ from \eqref{eq:loss} and $\mathrm{err}_n$ from \eqref{eq:err_n_old}, the stated result follows by recognizing that $F_n$ is the empirical distribution of $\{\theta_1/\widetilde\theta,\ldots,\theta_n/\widetilde\theta\}$.

\section{Experiments}
\label{numerical}
\begin{figure}[t]
    \centering
    \begin{subfigure}[b]{0.49\textwidth}
        \centering
        \includegraphics[width=\columnwidth]{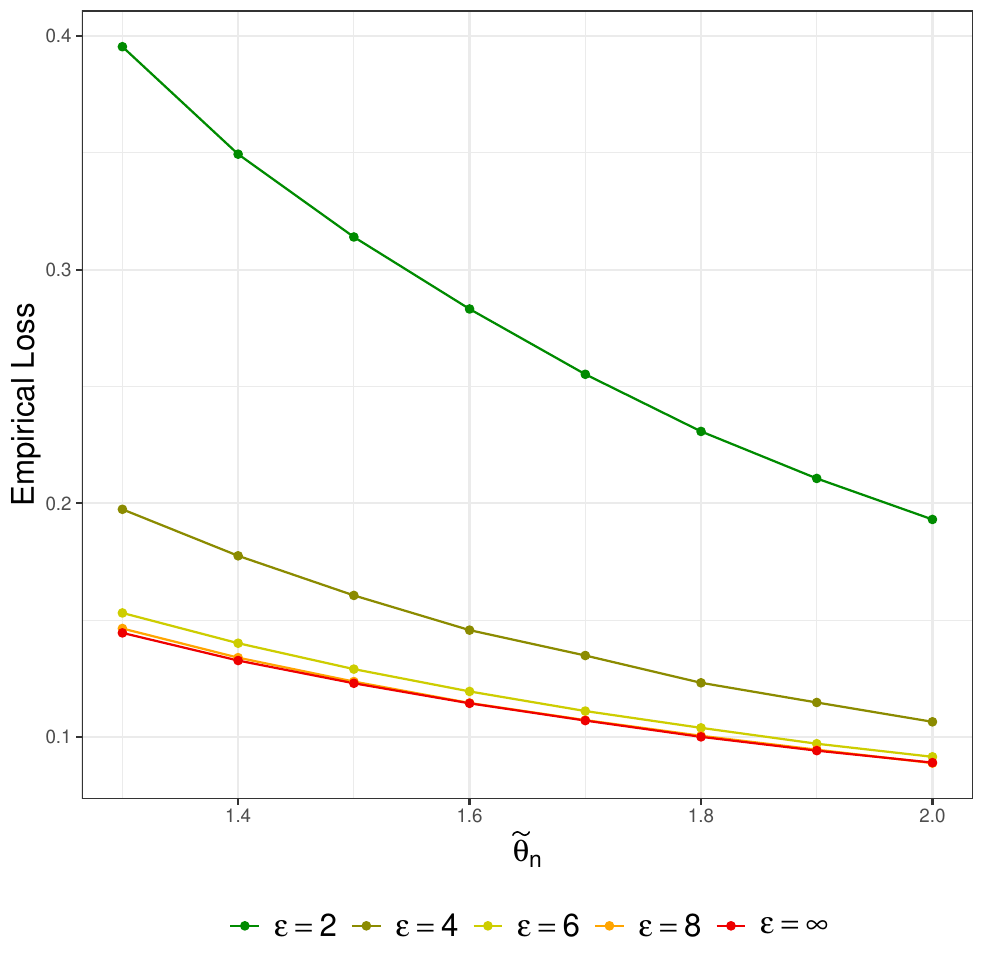}
        \caption{$\mathbb E[\mathcal L(\widehat \Pi,\Pi)]$ as a function of $\widetilde \theta$. }
        \label{fig:snr}
    \end{subfigure}%
    ~
    \begin{subfigure}[b]{0.49\textwidth}
        \centering
        \includegraphics[width=\columnwidth]{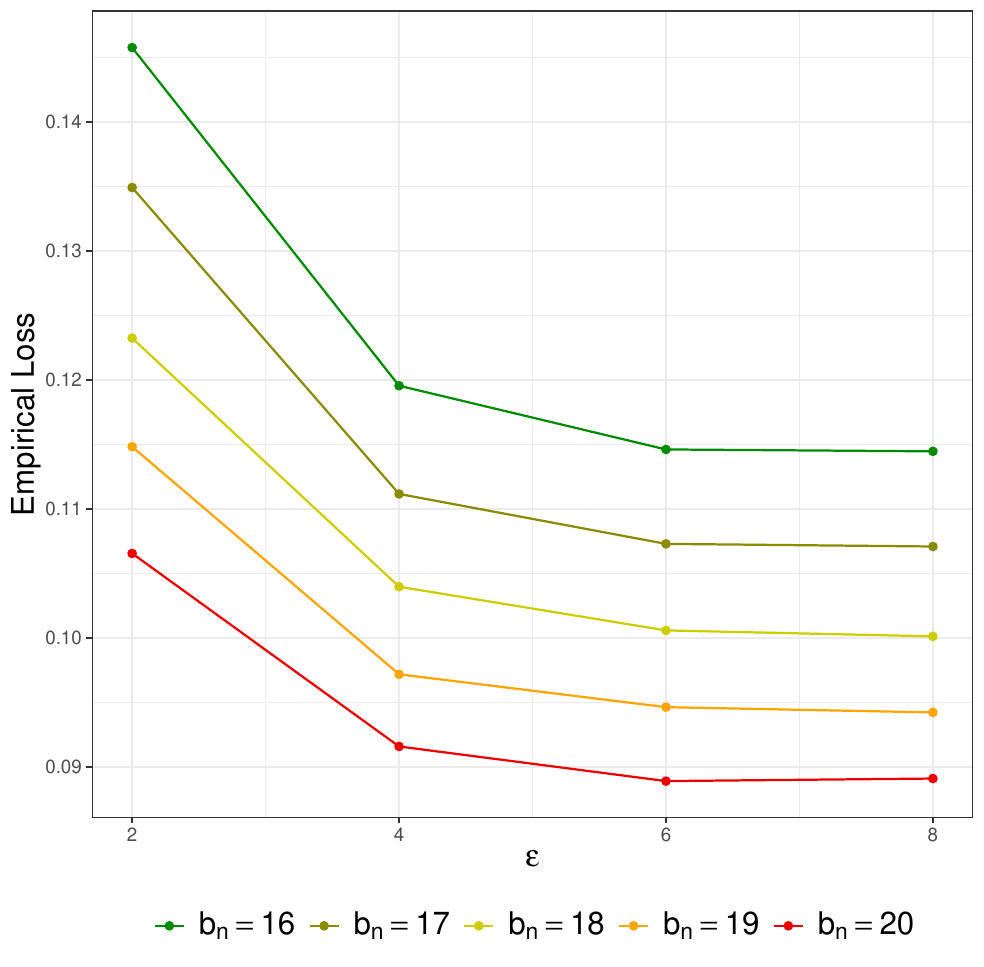}
        \caption{$\mathbb E[\mathcal L(\widehat \Pi,\Pi)]$ as a function of $\varepsilon$. }
        \label{fig:eps}
    \end{subfigure}%
    \caption{Plot of average loss $\mathcal{L}(\widehat \Pi,\Pi)$ over 100 independent replications as functions of $\widetilde \theta$ and $\varepsilon$.}
\end{figure}

To investigate how interplay between the privacy requirements specified by $\varepsilon$ and the degree heterogeneity in the network characterized by $\theta_1,\ldots,\theta_n$ impacts the accuracy of PriME, we conducted a series of numerical simulations. Our study revolved around a random graph with 2000 vertices and two communities, generated as per the model defined in Equation~\eqref{eq:DCMM}. To set the degree correction parameters $\theta_1, \ldots, \theta_n$, we adopted a specific procedure: we initially drew values $\theta^\circ_1, \ldots, \theta^\circ_n$ from a uniform distribution on the interval $[0.3,\,5]$. Subsequently, we set $\theta_i=b_n\theta^\circ_i/(\sum_{i=1}^{n}(\theta^\circ_i)^2)^{1/2}$ to ensure that $b^2_n=n\widetilde\theta^2=\|\theta\|_2^2$. By generating the degree heterogeneity parameters from a distribution with high entropy we ensure that the generated network has high degree heterogeneity.
Furthermore, the connection probability matrix is set to be $B=\xi_nI_n+(1-\xi_n)1_n1^\top_n$, with $\xi_n=0.9$. The tuning parameters of the algorithm were set to be $c=0.005$, and $\gamma=0.02$ throughout all numerical experiments.

The following experiments were designed to investigate how the sparsity level of the network—captured by the aggregated degree-heterogeneity parameter $\widetilde\theta$ and the privacy requirement $\varepsilon$—affects performance.

\begin{itemize}
  \item \textbf{Effect of sparsity on accuracy:} In our first experiment, we explored the impact of varying $\widetilde \theta=\|\theta\|_2^2/\sqrt{n}$ on the expected estimation risk $\mathbb E[\mathcal{L}(\widehat \Pi,\Pi)]$. Observe that, smaller values of $\widetilde\theta$ correspond to sparser graphs. To achieve this, we performed simulations on random networks based on the model specified above, with $\|\theta\|_2:=b_n \in \{13,\ldots,20\}$. For each value of $b_n$, we executed PriME in 100 replications. We estimated the average $\mathcal{L}(\widehat \Pi,\Pi)$ over these 100 replications to gauge the estimation risk. We performed this experiment for $\varepsilon\in\{2,\,4,\,6,\,8\}.$ We also executed the MIXED-SCORE algorithm from \cite{JIN2023} on the same simulation setting to compare with the non-private ($\varepsilon=\infty$) case. The resulting behavior of $\mathbb E[\mathcal L(\widehat \Pi,\Pi)]$ as a function of $\widetilde \theta=b_n/\sqrt{n}$ is visually presented in Figure \ref{fig:snr}. We see the accuracy is increasing with an increase in $\widetilde \theta$. This implies that when the network is less sparse, indicated by relatively higher values of $\widetilde \theta$, the estimation accuracy of the membership profiles for the nodes increases. This conforms to our findings in Theorem \ref{thm:upper_bound_informal}. Observe that it is well understood that spectral clustering methods perform well when the underlying network is dense. When the communities are pure, there are alternative techniques that can be used to boost the accuracy of the recovery of community labels. However, as shown in Theorem \ref{thm:lower_bound_informal}, under the mixed membership set-up, the membership profile recovery in sparse networks is fundamentally hard.

  \item \textbf{Effect of Privacy on Accuracy:} For the second experiment, we varied the privacy parameter $\varepsilon\in\{2,\,4,\,6,\,8\}$. For each fixed value of $\varepsilon,$ we executed PriME in 100 replications. Then we estimated the average $\mathcal{L}(\widehat\Pi,\Pi)$ over these 100 replications. We performed this experiment for $b_n\in\{16,\,17,\,\ldots,\,20\}.$ In Figure \ref{fig:eps}, we illustrate the behavior of $\mathbb E[\mathcal{L}(\widehat \Pi,\Pi)]$ in response to different privacy levels, providing insights into the trade-off between privacy requirements and estimation accuracy. We observe that accuracy is increasing as privacy becomes weaker ($\varepsilon$ increases), which aligns with our findings in Theorem \ref{thm:upper_bound_informal}.
\end{itemize}



\section{Real Data Examples}
\label{real_data}

In this section, we illustrate the performance of our method in identifying communities within real-world networks.

\subsection*{Facebook Ego Network Dataset}

Next, we apply our algorithm to recover the membership profiles of individuals in different social circles by studying their friendship networks on Facebook. In that direction, we consider the Facebook Ego Network dataset from \citet{NIPS2012_7a614fd0}. The data is openly available at \url{http://snap.stanford.edu/data/ego-Facebook.html}.

The dataset consists of multiple ego networks, each defined by a central \emph{ego} node, its direct connections (\emph{alters}), and the edges among these alters. For our analysis, we selected four ego networks and extracted the induced subgraphs, ensuring in each case that the resulting graph is connected.

Maintaining privacy is a fundamental concern when analyzing connectivity data of this nature. Indeed, Facebook restricted public release to ego networks precisely because only a small number of users consented to sharing information about their friendship ties. If an adversarial user were able to query the recovered membership profiles—obtained by running a mixed-membership estimation algorithm on such networks—to infer sensitive attributes of individuals, it would further discourage users from disclosing their connections. This erosion of trust ultimately hinders the ability of researchers to conduct reliable, population-level analysis using social network data.

\begin{figure*}[t]
    \centering
    \includegraphics[width = 0.8\textwidth]{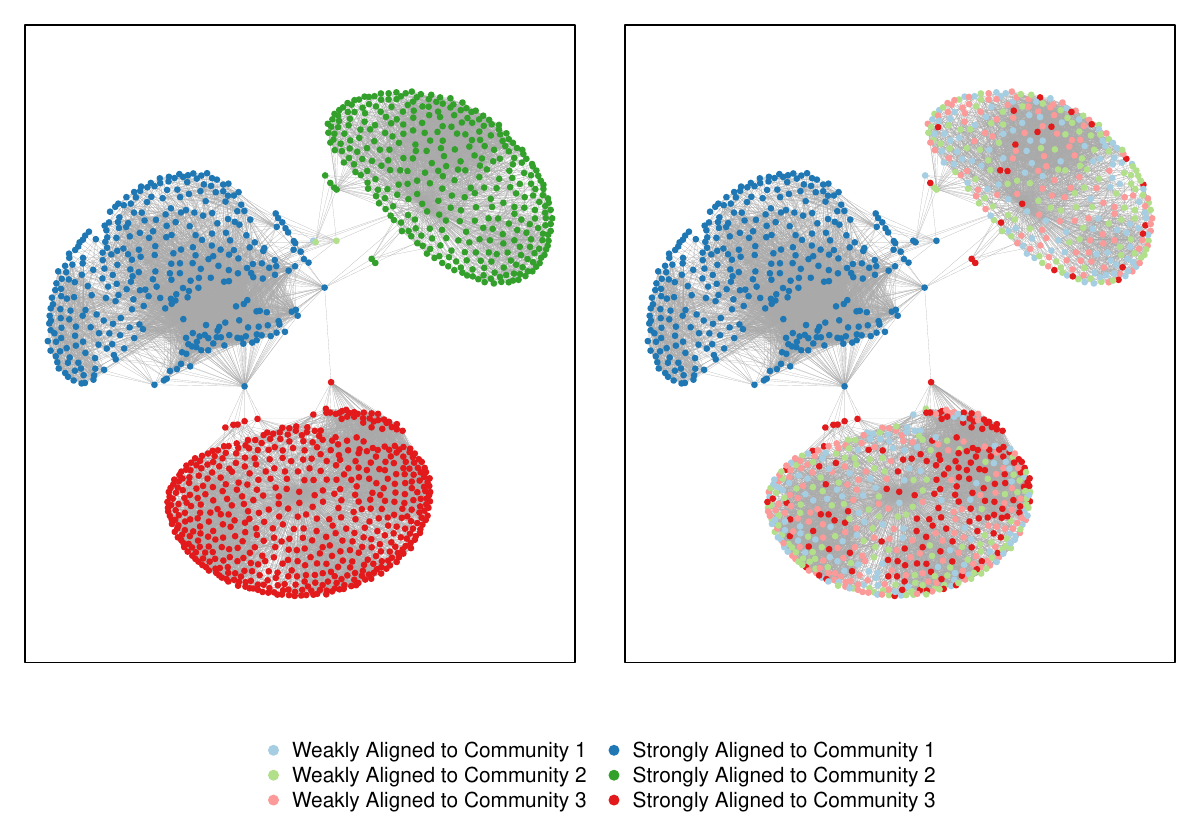}
    \caption{\small Facebook ego network with different communities.
    Left panel: Communities estimated by MIXED-SCORE from \cite{ke2022optimal}.
    Right panel: Communities estimated by {PriME} with $\varepsilon = 8$.
    }
    \label{fig:fb}
\end{figure*}

PriME provides an alternative framework in which the data provider applies a randomization mechanism to privatize edge information before releasing the network. This allows analysts to conduct meaningful inference while ensuring that individual-level connections remain protected. Inevitably, however, enforcing privacy affects the downstream accuracy of membership-profile estimation. In particular, the quality of the estimates depends on both the node’s degree and the underlying signal strength in the network. Through our analysis of this dataset, we investigate this interplay between estimation accuracy and the imposed privacy constraints.

In the final data set that we analyzed, there were $n=1234$ vertices and 12913 edges. We executed the MIXED-SCORE from \cite{JIN2023} algorithm and PriME on this dataset with the number of clusters equal to $3$ and $\varepsilon=8$. Each vertex $i$ was assigned to the cluster $j$ where $j=\argmax_{1\le j\le K}\widehat{\Pi}_{ij}$. The ties were broken at random. To investigate the alignment strength of the vertices to different communities, we consider the metric $V(i)=\max_{1\le j\le K}\widehat\Pi_{ij}$ for $1\le i\le n.$ This metric captures how ``pure" a vertex is, with the pure nodes having $V(i)$ close to $1$ and the impure nodes having $V(i)$ close to $1/3$. We categorize a vertex to be weakly aligned to its assigned cluster if $V(i)\in[1/3,\,1/2]$ and strongly aligned if $V(i)\in(1/2,\,1].$ Figure \ref{fig:fb} demonstrates the community structure of the network where the vertices in different communities are colored according to the strength of their alignment to their assigned communities. 

From our analysis, we find that enforcing privacy does not noticeably affect the quality of the estimated membership profiles for vertices in Community 1. In contrast, for the other two communities, the privatization step leads to substantial degradation: the inferred community labels become less confident, and misclassification becomes more frequent, with Community 2 being particularly affected. This behavior is expected in networks with pronounced degree heterogeneity and weak signal strength—settings in which privatizing edge information significantly distorts the underlying graph structure. As a result, the membership profiles of a large fraction of nodes become inherently difficult to estimate accurately.


\begin{figure*}[t]
    \centering
    \includegraphics[width = 0.8 \textwidth]{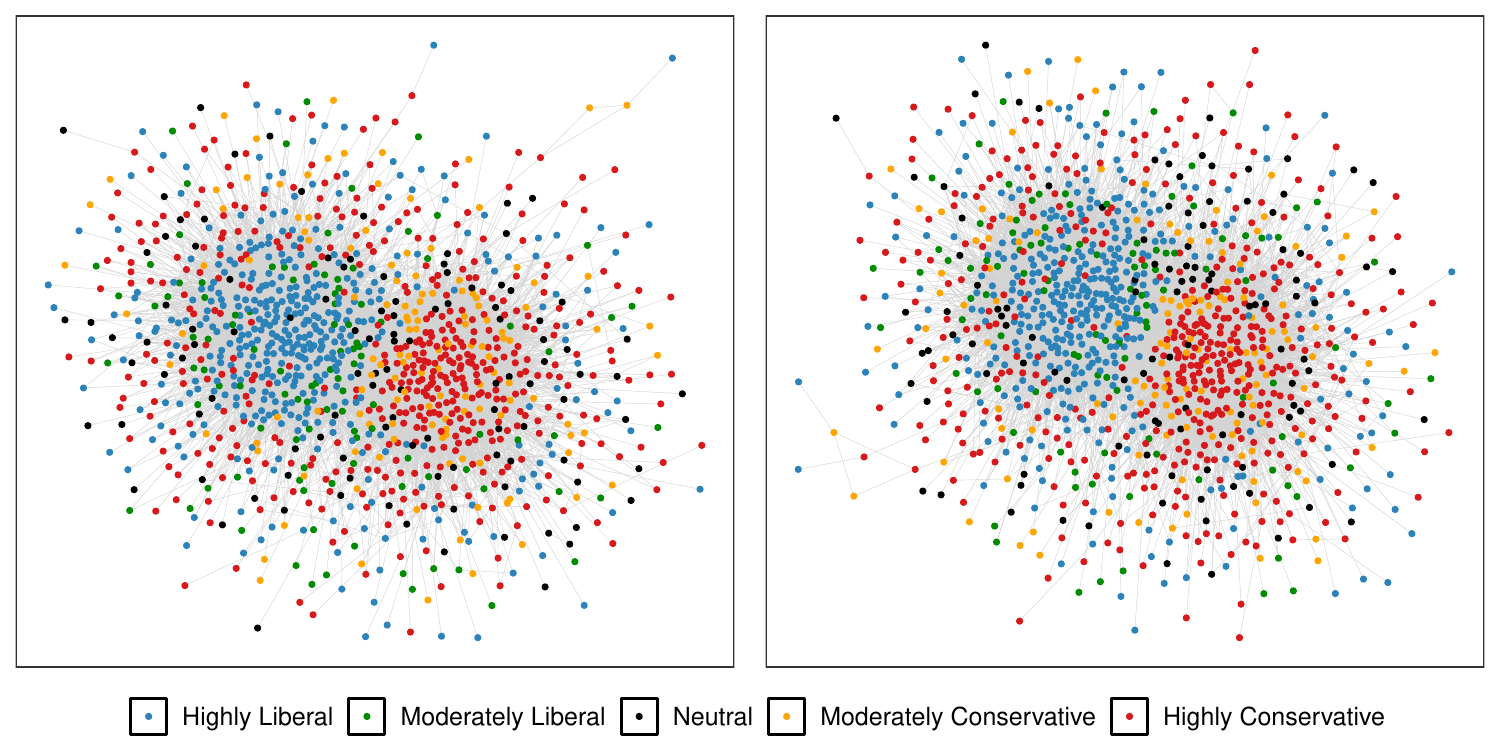}
    \caption{\small The political blog network and political affiliations of the blogs.
    Left panel: Communities estimated by MIXED-SCORE from \cite{ke2022optimal}.
    Right panel: Communities estimated by PriME with $\varepsilon = 1.5$. Out of $1222$ political blogs, 33 blogs had different group memberships in private and non-private cases.
    }
    \label{fig:blogs}
\end{figure*}

\begin{figure}[t]
    \centering
    \includegraphics[scale=0.4]{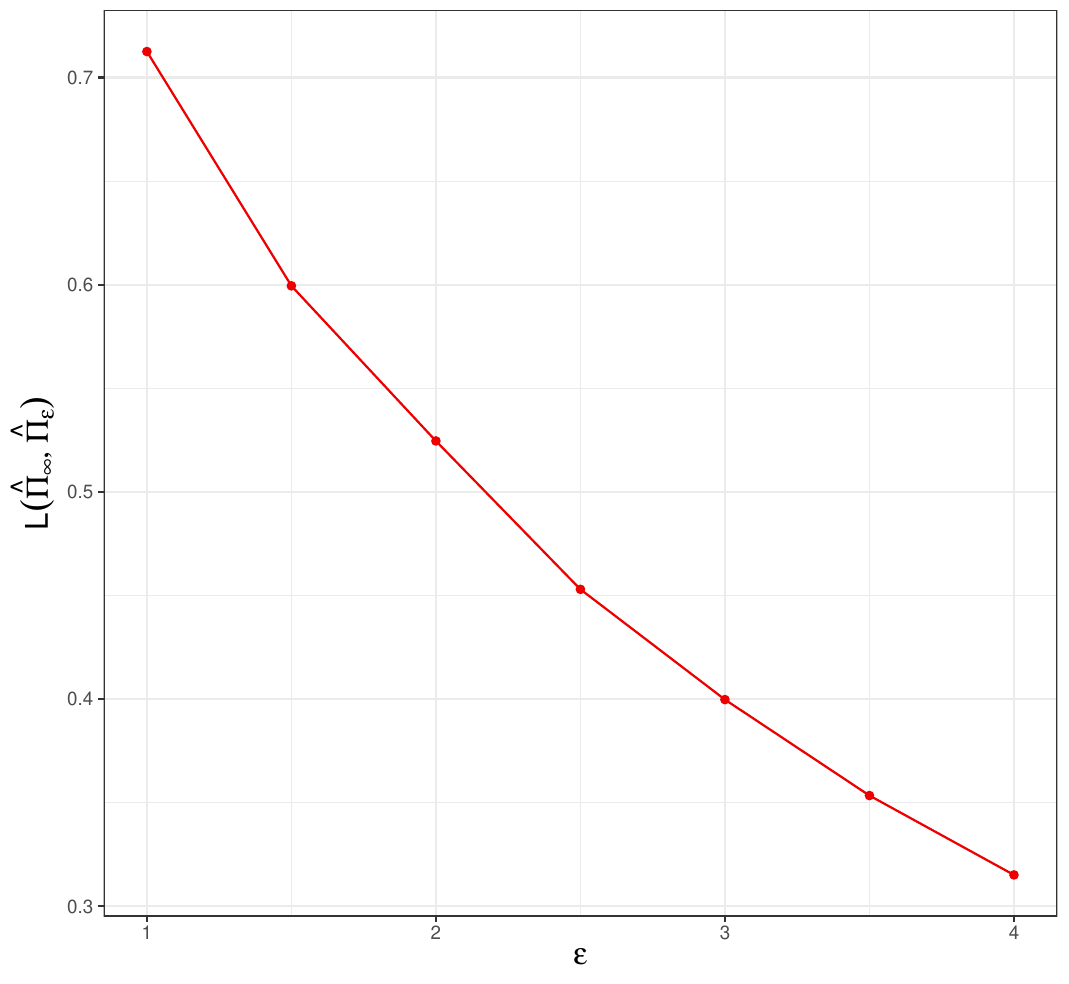}
    \caption{Plot of $\mathcal{L}(\widehat\Pi_\infty,\,\widehat\Pi_\varepsilon)$ as a function of $\varepsilon$ in political blog data.}
    \label{fig:loss against epsilon}
\end{figure}

\subsection*{Political Blog Dataset}
Next, we applied our algorithm to detect communities within the Political Blogs data from \citet{Adamic2005ThePB}. This dataset records linkages between political blogs in a single-day snapshot before the 2004 US Presidential Elections. The nodes of the network are $1222$ political blogs and an edge exists between two nodes if one of the nodes cites another. While privacy is not a primary concern for this particular dataset, it serves as an illustrative example of our method. It is important to note that in several other scenarios, when collecting data on political linkages, privacy concerns may be significant and individuals from whom the data was collected may want assurance of privacy protection. In such situations, our method will be useful for analyzing the data while maintaining privacy constraints. In this data, the nodes are already classified into \emph{conservative} or \emph{liberal}. We estimate to estimate the affiliation of each blog towards the liberal spectrum using PriME with $\varepsilon=1.5$ and the non-private MIXED-SCORE algorithm from \cite{JIN2023}. Then the blogs are classified into `highly conservative', `moderately conservative', `neutral', `moderately liberal', and `highly liberal' based on whether their membership likelihood for the liberal community is between $[0,0.2], (0.2,0.4], (0.4,0.6], (0.6,0.8]$ and $(0.8,1]$ respectively. In Figure \ref{fig:blogs}, we provide a visual representation of how the nodes in this network are affiliated with these five political categories.

Finally, in Figure \ref{fig:loss against epsilon}, we show the distance between $\widehat \Pi_\varepsilon$ and $\widehat{\Pi}_\infty$ in terms of $\mathcal L(\cdot,\cdot)$ defined in \eqref{eq:loss} as a function of $\varepsilon$, where $\widehat \Pi_\varepsilon$ and $\widehat{\Pi}_\infty$ are the estimators of $\Pi$ in the private and the non-private settings, respectively. Through this plot, we demonstrate the effect of enforcing privacy on the estimation procedure for $\Pi$. As expected, reducing the privacy requirements (increasing $\varepsilon$) reduces the difference between the non-private estimation procedure laid out in \cite{JIN2023} and PriME.

However, because this dataset exhibits relatively high signal strength and limited degree heterogeneity, the performance gap between PriME and MIXED-SCORE remains minimal. This experiment highlights that the effect of privacy on the accuracy of membership-profile estimation is highly sensitive to the structural properties of the underlying network—particularly its signal strength, sparsity, and degree heterogeneity.

\section{Discussion}

This paper introduces PriME, a novel approach that addresses the fundamental challenge of estimating community membership probabilities within the Degree Corrected Mixed Membership Stochastic Block Model, while ensuring individual edge privacy. Operating under the  $\varepsilon$-edge Local Differential Privacy framework, our algorithm leverages a symmetric edge flip strategy to release a privatized version of the network for downstream analysis. The development of PriME—a spectral algorithm meticulously designed to estimate the membership matrix $\Pi$ while adhering to $\varepsilon$-local differential privacy constraints—highlights our method's practical and theoretical robustness. This study pioneers the application of privacy constraints within mixed membership models, marking it as the first of its kind and positioning our approach at the forefront of privacy-aware community detection.

Our findings reveal that PriME consistently produces estimators that not only have robust practical performance but achieve minimax optimal rates in estimating the membership matrix. This performance is corroborated through extensive numerical simulations and real-world data applications, highlighting the algorithm's utility and reliability in actual network environments.

Despite these advancements, our work also opens several avenues for further research. Currently, our focus has been predominantly on edge differential privacy. However, the literature, including works like those by \cite{borgs2018revealing}, suggests the potential benefits of node differential privacy, which offers more robust protections. Exploring rate-optimal algorithms for community recovery that incorporate such stronger privacy measures remains an unexplored area and presents a fertile ground for future research.

Additionally, considering other forms of privacy, such as central differential privacy and investigating the information-theoretic lower bounds required to consistently detect underlying community structures under various privacy constraints are compelling directions for research. Moreover, establishing lower bounds for the risk associated with estimating community labels, particularly in scenarios where labels are not mixed, would significantly enhance our understanding of privacy-preserving mechanisms in network analysis.

In conclusion, the introduction of PriME represents a significant contribution to the ongoing evolution of privacy-aware community detection. By addressing previously unexplored challenges in mixed membership models under strict local differential privacy constraints, our work lays a solid foundation for future research in understanding the cost of enforcing privacy constraints under a mathematically rigorous framework for complex data analysis problems involving relational data represented through random networks.

\appendix
\section{Proof of Theorem \ref{thm:lower_bound_informal}}
Let us recall that we assume $\theta_i=O(1)$ for all $i \in [n]$ and the privacy parameter $\varepsilon \in [0,\varepsilon_0]$. Let us define $\nu_i:=\theta_i/\widetilde\theta$ where $\widetilde \theta=\|\theta\|_2/\sqrt{n}$.

\subsection{Proof of Theorem \ref{thm:lower_bound_informal} under (\ref{eq:F_n-condn})}
\paragraph{Construction of a least favorable configuration}
We begin by constructing a least favourable configuration for the pair of parameters $(\Pi,B)$ for a fixed configuration of average degree parameters $\theta \in \mathcal G_n(\rho,a_0)$. Let us first order $\theta_i$'s such that we have $\theta_{(1)}\le\theta_{(2)}\le\ldots\le\theta_{(n)}.$ We also consider the corresponding ranked $\nu_i$ parameters  $\nu_{(1)}\le\nu_{(2)}\le\ldots\le\nu_{(n)}$. Consider $c_n$ as in Definition \ref{def:constr_deg_cor} and define
$$s_n=\max\{1\le i\le n:\nu_{(i)}\le\mathrm{err}_n\}\quad\text{and}\quad n_0=\max\{1\le i\le n:\nu_{(i)}\le c_n\}-s_n.$$
This means $\mathrm{err}_n\le\rho c_n\le c_n$, where $\mathrm{err}_n$ is defined in \eqref{eq:err_n_old} and
\begin{equation}
    \int_{\rho c_n}^{c_n}\frac{\text{d}F_n(t)}{t\wedge 1}\ge a_0\int_{\mathrm{err}_n}^{\infty}\frac{\text{d}F_n(t)}{t\wedge 1},
    \label{eq:defn2.2}
\end{equation}
where $F_n$ is the empirical distribution of $\nu_1,\ldots,\nu_n$.
Moreover $n_0$ is the number of $\nu_{(i)}$'s such that $\mathrm{err}_n<\nu_{(i)}\le c_n.$ Here $c_n$ has to be bounded by some constant $C>0.$ This follows from the fact that $\int_0^1t^2\text{d}F_n(t)=1$ and \eqref{eq:defn2.2}.
Define
\begin{equation}
    \gamma_n:=\frac{c_0\sqrt{K}}{\beta_n\sqrt{n}~\widetilde \theta}\left(\frac{e^\varepsilon +1}{e^\varepsilon - 1}\right)=\frac{c_0\,\widetilde\theta}{K}\mathrm{err}_n.
    \label{eq:gamma_n}
\end{equation}
for some constant $c_0$ to be chosen later. Let $\mathcal{N}_0$ be the set of indices such that after re-ordering the $\theta_i$'s, the indices are between $s_n$ and $s_n+n_0.$ Now re-arrange the nodes by putting the nodes in $\mathcal{N}_0$ as the first $n_0$ nodes. Then, for $1\le i\le n_0,$ we have $\mathrm{err}_n<\theta_i/\widetilde\theta\le c_n\le C.$ The last $n-n_0$ nodes will be pure nodes distributed among the $K$ communities such that the sum of squares of degrees of the pure nodes in different communities are of the same order. In particular, if $\mathcal{C}_k$ denotes the pure nodes in community $k,$ we want to make sure that
\begin{equation}
    \min_{1\le k\le K}\sum_{i\in\mathcal{C}_k}\theta_i^2\ge \widetilde{c}\max_{1\le \ell\le K}\sum_{i\in\mathcal{C}_\ell}\theta_i^2
    \label{eq:deg_condn}
\end{equation}
for some constant $\widetilde c.$ Let $\{e_1,\,e_2,\,\ldots,\,e_K\}$ be the Euclidean basis vectors of $\R^K.$ We set
$$\Pi^{(0)}=\left(\underbracket{\frac{1_K}{K},\,\frac{1_K}{K},\,\ldots,\,\frac{1_K}{K}}_{n_0},\,\underbracket{e_1,\,e_1,\,\ldots,\,e_1,\,e_2,\,e_2,\,\ldots,\,e_2,\,\ldots,\,e_K,\,e_K,\,\ldots,\,e_K}_{n-n_0}\right)^\top;$$
where $e_k$ is the $k$-th canonical basis vector for $\R^K$.
Observe that the first $n_0$ of the $\theta_i$'s are assigned to the mixed nodes and for $1\le i\le n_0,$ using $\mathrm{err}_n \ll 1$, we have $\theta_i/\widetilde\theta\ge\mathrm{err}_n.$ Now let $m=\lfloor n_0/2\rfloor,\,r=\lfloor K/2\rfloor$ and $s=mr.$ Then by Varshamov-Gilbert Lemma (See Lemma 2.9 of \cite{MR2724359}), we can find binary vectors $w^{(0)},\,w^{(1)},\,\ldots,\,w^{(J)}\in\{0,\,1\}^s$ such that $J\ge 2^{s/8},\,w^{(0)}=0$ and $\|w^{(j)}-w^{(k)}\|_1\ge s/8$ for all $0\le j<k\le J$, and rearrange each $w^{(j)}$ to form the matrices $\widetilde{\Gamma}^{(j)} \in \R^{m\times r}$ for $0\le j\le J.$ Finally, define $\Gamma^{(j)} \in \R^{n\times K}$ as follows:
\begin{align}
\label{eq:construct_gamma}
  \Gamma^{(j)}=\left[
    \begin{array}{@{} c c c@{}}
     \widetilde{\Gamma}^{(j)}&-\widetilde{\Gamma}^{(j)}&0_{m\times 1}\\
     -\widetilde{\Gamma}^{(j)}&\widetilde{\Gamma}^{(j)}&0_{m\times 1}\\
    0_{1\times r}&0_{1\times r}&0\\
      \cdashline{1-3}
      0_{(n-n_0)\times r}&0_{(n-n_0)\times r}&0_{(n-n_0)\times 1}
    \end{array}
  \right],\quad0\le j\le J.
\end{align}
If $K$ is even, the last column of $\Gamma^{(j)}$ disappears and if $n_0$ is even, the last row above the dotted line of $\Gamma^{(j)}$ disappears. Denote $\Theta_{\mathrm{trunc}}=\mathsf{diag}(\theta_1\wedge\widetilde\theta,\,\theta_2\wedge\widetilde\theta,\,\ldots,\,\theta_n\wedge\widetilde\theta) \in \R^{n \times n}$. Now, construct
\begin{equation}
\Pi^{(j)}=\Pi^{(0)}+\gamma_n\Theta_{\mathrm{trunc}}^{-1}~\Gamma^{(j)},\quad\text{for }0\le j\le J,
    \label{eq:lfc-mem}
\end{equation}
and consider the connectivity matrix given by $B^*=c'\beta_nI_K+(1-c'\beta_n)1_K1_K^\top$ for some constant $c'\in(0,\,1).$ First, we show that for all $0\le j\le J,$ the matrices $(\Pi^{(j)},\,B^*)\in\mathcal{Q}_n(\theta)$, where $\mathcal{Q}_n(\theta)$ is the set of matrices $(\Pi,B)$ satisfying \eqref{eq:singularity_b}, \eqref{eq:spectral_gap}, \eqref{eq:perron_root},  \eqref{eq:non_negligible_presence}.

Firstly, $\Pi^{(0)}$ is a valid membership matrix with all entries in $[0,\,1]$ and rows summing up to 1. By the definition of the perturbation matrices, in $\Pi^{(j)},$ only the rows $1\le i\le n_0$ are perturbed. Before perturbation, each of the entries in the first $n_0$ rows were $\frac{1}{K}.$ Now recalling the definition of $\gamma_n$ in \eqref{eq:gamma_n}, and the fact that $\theta_i/\widetilde\theta\ge\mathrm{err}_n$ for $1\le i\le n_0,$ we can bound the magnitude of the perturbations in \eqref{eq:lfc-mem} by
$$\frac{\gamma_n}{\theta_i\wedge\widetilde\theta}=\frac{c_0\widetilde\theta\mathrm{err}_n}{K(\theta_i\wedge\widetilde\theta)}=\frac{c_0\mathrm{err}_n}{K((\theta_i/\widetilde\theta)\wedge1)}\le\frac{c_0\mathrm{err}_n}{K(\mathrm{err}_n\wedge1)}\le\frac{c_0}{K},$$
since $\mathrm{err}_n\ll1.$ Now $c_0$ can be chosen small enough that even after the perturbations, $\Pi^{(j)}$ are valid membership matrices with rows adding up to 1. Note that the first condition in \eqref{eq:singularity_b} is trivially true since $\min_{i,j}B^*_{ij}=1-c'\beta_n\ge c$ when $c'$ is chosen appropriately. Denote
$$G^{(j)}=K\|\theta\|_2^{-2}({\Pi^{(j)}}^\top \Theta^2\Pi^{(j)})\quad\text{for }0\le j\le J.$$
Then,
\begin{align*}
    G^{(0)}&=\frac{K}{\|\theta\|_2^2}\left[\frac{1}{K^2}\left(\sum_{i=1}^{n_0}\theta_i^2\right)1_K1_K^\top+\sum_{k=1}^K\left(\sum_{i\in\mathcal{C}_k}\theta_i^2\right)e_ke_k^\top\right]\\
    &=\frac{1}{K\|\theta\|_2^2}S_01_K1_K^\top+\frac{K}{\|\theta\|_2^2}\mathrm{diag}(S_1,\,S_2,\,\ldots,\,S_K)
\end{align*}
where $S_0=\sum_{i=1}^{n_0}\theta_i^2$ and $S_k=\sum_{i\in\mathcal{C}_k}\theta_i^2.$ Thus,
$$\|G^{(0)}\|\le\frac{S_0}{\|\theta\|_2^2}+\frac{K}{\|\theta\|_2^2}\max_{1\le k\le K}S_k\le\frac{S_0}{\|\theta\|_2^2}+\frac{\widetilde cK}{\|\theta\|_2^2}\min_{1\le k\le K}S_k\lesssim 1,$$
due to \eqref{eq:deg_condn} and the fact that $S_0+\sum_{k=1}^KS_k=\|\theta\|_2^2.$ Similarly, the minimum eigenvalue of $G^{(0)}$ is bounded below by
\begin{equation}
    \lambda_{\min}(G^{(0)})\ge\frac{K}{\|\theta\|_2^2}\min_{1\le k\le K}S_k.
    \label{eq:G_min}
\end{equation}
Recall that for $1\le i\le n_0,$ we have $\theta_i/\widetilde\theta\le c_n\le C$ for some $C>0.$ Then
$$n\widetilde\theta^2=\|\theta\|_2^2=S_0+\sum_{k=1}^KS_k\le Cn_0\widetilde\theta^2+K\widetilde c\min_{1\le k\le K}S_k.$$
Thus $K\min_{1\le k\le K}S_k\asymp\|\theta\|_2^2.$ Plugging this into \eqref{eq:G_min}, we obtain $\lambda_{\min}(G^{(0)})\gtrsim 1.$ This ensures the last condition in \eqref{eq:singularity_b} for $G^{(0)}.$ Now
\begin{align}
    \|G^{(j)}-G^{(0)}\|&=\frac{K}{\|\theta\|_2^2}\|(\Pi^{(0)}+\gamma_n\Theta_{\mathrm{trunc}}^{-1}\Gamma^{(j)})^\top\Theta^2(\Pi^{(0)}+\gamma_n\Theta_{\mathrm{trunc}}^{-1}\Gamma^{(j)})-{\Pi^{(0)}}^\top\Theta^2\Pi^{(0)}\|\nonumber\\
    &\le\frac{2K}{\|\theta\|_2^2}\|G^{(0)}\|^{1/2}\|\gamma_n^2{\Gamma^{(j)}}^\top\Theta_{\mathrm{trunc}}^{-1}\Theta^2\Theta_{\mathrm{trunc}}^{-1}\Gamma^{(j)}\|^{1/2}\nonumber\\
    &\quad\quad\quad\quad\quad\quad\quad\quad\quad\quad+\frac{K}{\|\theta\|_2^2}\|\gamma_n^2{\Gamma^{(j)}}^\top\Theta_{\mathrm{trunc}}^{-1}\Theta^2\Theta_{\mathrm{trunc}}^{-1}\Gamma^{(j)}\|,
    \label{eq:G_gap}
\end{align}
where we use the fact that $\|A^\top MB+B^\top MA\|\le2||A^\top MA\|^{1/2}||B^\top MB\|^{1/2}$ for any positive semidefinite matrix $M.$ Moreover,
\begin{equation}
    \frac{K}{\|\theta\|_2^2}\|\gamma_n^2{\Gamma^{(j)}}^\top\Theta_{\mathrm{trunc}}^{-1}\Theta^2\Theta_{\mathrm{trunc}}^{-1}\Gamma^{(j)}\|=\frac{c_0^2\mathrm{err}_n^2}{nK}\|{\Gamma^{(j)}}^\top\Theta_{\mathrm{trunc}}^{-1}\Theta^2\Theta_{\mathrm{trunc}}^{-1}\Gamma^{(j)}\|=o(1)
    \label{eq:perturb_bound}
\end{equation}
recalling the definition of $\gamma_n$ in \eqref{eq:gamma_n}. Now, by Weyl's inequality and \eqref{eq:G_gap}, we conclude that \eqref{eq:singularity_b} holds for every $G^{(j)}$ for $0\le j\le J.$ Observe that for our particular choice of $B^*,$ we have $\lambda_1(B^*)\asymp K$ and $\lambda_k(B^*)=c'\beta_n$ for all $2\le k\le K.$ And,
$$|\lambda_K(B^*G^{(0)})|\ge\frac{1}{\lambda_1((B^*G^{(0)})^{-1})}\ge\frac{c'\beta_n}{c_1}.$$
We can take $c'=c_1$ so that $B^*G^{(0)}$ and similarly all the $B^*G^{(j)}$'s satisfy the second inequality in \eqref{eq:spectral_gap}. It is also easy to see that the first inequality in \eqref{eq:spectral_gap} holds for $B^*G^{(0)}$ for some appropriate choice of $c_2.$ Next, using \eqref{eq:G_gap} and \eqref{eq:perturb_bound} we have
\begin{equation}
    \|B^*G^{(j)}-B^*G^{(0)}\|\le\|B^*\|\cdot\|G^{(j)}-G^{(0)}\|\le K(2c_1^{1/2}c_0\mathrm{err}_n+c_0^2\mathrm{err}_n^2)\le C_1K\mathrm{err}_n
    \label{eq:b_perturb_norm}
\end{equation}
for some constant $C_1>0$ as $\mathrm{err}_n\to0.$ Thus, by Weyl's inequality,
$$\lambda_1(B^*G^{(j)})-|\lambda_2(B^*G^{(j)})|\ge(\lambda_1(B^*G^{(0)})-C_1K\mathrm{err}_n)-(|\lambda_2(B^*G^{(j)})|+C_1K\mathrm{err}_n)\ge C_2K$$
for some constant $C_2>0.$ Hence, the first inequality in \eqref{eq:spectral_gap} holds true for all $B^*G^{(j)}$'s. Also, \eqref{eq:perron_root} and \eqref{eq:non_negligible_presence} are direct consequences of \eqref{eq:deg_condn} for every $\Pi^{(j)}$ with $0\le j\le J.$
\paragraph{Separation of the constructed configurations}
We shall show that the above configuration is \emph{separated}(up to absolute constants) by
\begin{align}
\label{eq:temp_placeholder_err_lower_bd}
\int_0^\infty\min\left\{\frac{\mathrm{err}_n}{t\wedge1},\,1\right\}\text{d}F_n(t)
\end{align}
with respect to the loss function $\mathcal{L}(\cdot,\cdot)$, yet the \emph{distributions of the privatized output} matrices $\pM(\Pi^{(j)})$ remain \emph{close} to that of $\pM(\Pi^{(0)})$ in terms of the Kullback--Leibler divergence, regardless of the specific privatization mechanism used to generate $\pM(\Pi^{(0)})$. Consequently, based on the privatized data alone, it is impossible to reliably distinguish whether the underlying (non-private) data originated from $\Pi^{(0)}$ or from $\Pi^{(j)}$. This implies that \emph{accurate reconstruction} of the true membership profile $\Pi$ is infeasible for any algorithm unless the separation between feasible membership profiles exceeds the \emph{fundamental threshold} stated in \eqref{eq:temp_placeholder_err_lower_bd} and any estimator constructed exclusively using any privatized network data satisfying the edge $\varepsilon$-local differential privacy constraint will suffer the error described in the theorem irrespective of the estimator used to reconstruct $\Pi$ and the mechanism adopted to privatize the data.

Observe that any $0\le j<k\le J,$
\begin{align}
    \mathcal{L}(\Pi^{(j)},\,\Pi^{(k)})&=\frac{1}{n}\sum_{i=1}^n\|\Pi_{i*}^{(j)}-\Pi_{i*}^{(k)}\|_1=\frac{\gamma_n}{n}\sum_{i=1}^{n_0}\frac{1}{\theta_i \wedge \widetilde \theta}\|(\Gamma^{(j)}-\Gamma^{(k)})_{i*}\|_1.
    \label{eq:par_sep}
\end{align}
We know that $\|\widetilde\Gamma^{(j)}-\widetilde\Gamma^{(k)}\|_1\ge\lfloor n_0/2\rfloor\times\lfloor K/2\rfloor/8.$ Thus, at least $\lfloor\lfloor n_0/2\rfloor/8\rfloor$ rows of $\widetilde\Gamma^{(j)}-\widetilde\Gamma^{(k)}$ should have positive Hamming distance. And due to the construction of $\Gamma^{(j)}$ from $\widetilde\Gamma^{(j)}$ in \eqref{eq:construct_gamma}, we have
\begin{align}
&0\le\|(\Gamma^{(j)}-\Gamma^{(k)})_{i*}\|_1\le2\lfloor K/2\rfloor\nonumber\\
\text{and}\quad&\sum_{i=1}^{n_0}\|(\Gamma^{(j)}-\Gamma^{(k)})_{i*}\|_1=4\|\widetilde\Gamma^{(j)}-\widetilde\Gamma^{(k)}\|_1\ge\lfloor n_0/2\rfloor\times\lfloor K/2\rfloor/2.
\label{eq:constraint}
\end{align}
Treating the $L_1$ norms of the rows as weights, the RHS of \eqref{eq:par_sep} would be minimized when the large $\theta_i$'s correspond to large $L_1$ norms $\|(\Gamma^{(j)}-\Gamma^{(k)})_{i*}\|_1,$ subject to the constraint in \eqref{eq:constraint} and $n_1:=2\lfloor\lfloor n_0/2\rfloor/8\rfloor$ of the $L_1$ norms being positive. If we assume without loss of generality that $\theta_{(1)}\le\theta_{(2)}\le\ldots\le\theta_{(n_0)},$ then the RHS of \eqref{eq:par_sep} is minimized as
\begin{align}
    \mathcal{L}(\Pi^{(j)},\,\Pi^{(k)})&\ge\frac{\gamma_n}{n}\sum_{i=n_0-n_1+1}^{n_0}\frac{2\lfloor K/2\rfloor}{\theta_i\wedge\widetilde\theta}\gtrsim\frac{\mathrm{err}_n}{n}\sum_{i=n_0-n_1+1}^{n_0}\frac{1}{\nu_i\wedge1},
    \label{eq:par_sep_1}
\end{align}
recalling the definition of $\gamma_n$ in \eqref{eq:gamma_n}. Now one may observe that for $1\le i\le n_0,$ we have
$$\frac{\mathrm{err}_n}{n(\nu_i\wedge 1)}\le\frac{\mathrm{err}_n}{n(\mathrm{err}_n\wedge 1)}\le\frac{1}{n}\lesssim\mathrm{err}_n$$
so that we can add finitely many terms in the RHS of \eqref{eq:par_sep_1} without changing the inequality asymptotically. Thus, continuing from \eqref{eq:par_sep_1}, we obtain
\begin{align}
    \mathcal{L}(\Pi^{(j)},\,\Pi^{(k)})&\gtrsim\frac{\mathrm{err}_n}{n}\sum_{i=n_0-n_1-1}^{n_0}\frac{1}{\nu_i\wedge1}\ge\mathrm{err}_n\int_{\tau_n}^{c_n}\frac{\text{d}F_n(t)}{t\wedge 1}.
    \label{eq:par_sep_2}
\end{align}
where $\tau_n$ is the quantile $F_n^{-1}((n_0-n_1-1)/n)$ of $F_n,$ the empirical distribution function of $\{\nu_1,\,\nu_2,\,\ldots,\,\nu_n\}.$ A simple calculation shows that $n_1=2\lfloor\lfloor n_0/2\rfloor/8\rfloor\ge n_0/8-2.$ Thus,
\begin{equation}
\int_{\tau_n}^{c_n}\text{d}F_n(t)=\frac{n_1+2}{n}\ge\frac{n_0}{8n}=\frac{1}{8}\int_{\mathrm{err}_n}^{c_n}\text{d}F_n(t).
\label{eq:tau_n}
\end{equation}
Now again continuing from \eqref{eq:par_sep_2},
\begin{align}
    \mathcal{L}(\Pi^{(j)},\,\Pi^{(k)})&\gtrsim\mathrm{err}_n\int_{\tau_n}^{c_n}\frac{\text{d}F_n(t)}{t\wedge 1}\ge\frac{\mathrm{err}_n}{c_n\wedge1}\int_{\tau_n}^{c_n}\text{d}F_n(t)\nonumber\\
    &\ge\frac{\mathrm{err}_n\rho}{8(\rho c_n\wedge1)}\int_{\mathrm{err}_n}^{c_n}\text{d}F_n(t)&[\text{From Definition \ref{def:constr_deg_cor} and \eqref{eq:tau_n}}]\nonumber\\
    &\ge\frac{\mathrm{err}_n\rho}{8(\rho c_n\wedge 1)}\int_{\rho c_n}^{c_n}\text{d}F_n(t)&[\text{as}~\mathrm{err}_n\le\rho c_n]\nonumber\\
    &\ge\frac{\mathrm{err}_n\rho}{8}\int_{\rho c_n}^{c_n}\frac{\text{d}F_n(t)}{t\wedge 1}\nonumber\\
    &\ge\frac{\mathrm{err}_n\rho a_0}{8}\int_{\mathrm{err}_n}^\infty\frac{\text{d}F_n(t)}{t\wedge 1}&[\text{Due to \eqref{eq:defn2.2}}]\nonumber\\
    &\gtrsim\int_{\mathrm{err}_n}^\infty\min\left\{\frac{\mathrm{err}_n}{t\wedge1},1\right\}\text{d}F_n(t)\nonumber\\
    &\gtrsim\int_{0}^\infty\min\left\{\frac{\mathrm{err}_n}{t\wedge1},1\right\}\text{d}F_n(t).&[\text{Due to \eqref{eq:F_n-condn}}]
    \label{eq:separation}
\end{align}
So the parameters $\Pi^{(j)}$ are well separated. 

\paragraph{Controlling the divergence between the induced distributions}
Recall $\mathcal{R}_\varepsilon$, the collection of randomized mechanisms ensuring $\varepsilon$-edge local differential privacy for adjacency matrices. Using Corollary 1 of \cite{6686179}, we obtain:
\begin{equation}
\label{eq:unif_upperbd_kl}
\sup_{\mathcal{M} \in \mathcal{R}_\varepsilon} \KL(\mathcal{M}(A_{\Pi^{(j)}}) \| \mathcal{M}(A_{\Pi^{(0)}})) 
\leq 4(e^\varepsilon - 1)^2 \sum_{e \in \binom{n}{2}} \|P_{\Pi^{(j)},\,e} - P_{\Pi^{(0)},\,e}\|_{\mathrm{TV}}^2,
\end{equation}
where, for $j \in [J]$, $\mathcal{M}(A_{\Pi^{(j)}})$ denotes the distribution of the adjacency matrix $A_{\Pi^{(j)}}$ generated from the model in \eqref{eq:adj_matrix} using parameter $\Pi^{(j)}$, after applying the randomized mechanism $\mathcal{M}$ to ensure $\varepsilon$-edge LDP, and $P_{\Pi^{(j)},\,e}:=\theta_k\theta_\ell(\Pi^{(j)}_{k*})^\top B^*\Pi^{(j)}_{\ell*}$ for $e=(k,\ell)$. If 
\begin{align}
\label{eq:bound_kl_fano}
\sup_{\mathcal{M} \in \mathcal{R}_\varepsilon} \frac{1}{J} \sum_{j=1}^{J} \KL(\mathcal{M}(A_{\Pi^{(j)}}) \| \mathcal{M}(A_{\Pi^{(0)}})) \leq \alpha \log J,
\end{align}
for some $\alpha \in (0,1/8)$, then, using the arguments in Proposition 2.3 and Theorem 2.5 of \cite{MR2724359}, we can show that
\begin{align}
\label{eq:fano_lowe_bound}
&\inf_{\widehat{\Pi} \in \mathfrak{M}_\varepsilon} \sup_{(\Pi, B) \in \mathcal{Q}_n(\theta)} \P\left[\mathcal{L}(\widehat{\Pi}, \Pi) \gtrsim\int_{0}^\infty\min\left\{\frac{\mathrm{err}_n}{t\wedge1},1\right\}\text{d}F_n(t)\right] \nonumber\\
&\hskip 20em \geq \frac{\sqrt{J}}{1+\sqrt{J}}\left(1 - 2\alpha - \sqrt{\frac{2\alpha}{\log J}}\right).
\end{align}
By taking $J \rightarrow \infty$, we can ensure that the above probability is greater than $1-o(n^{-2})$. Therefore, it is enough to show \eqref{eq:unif_upperbd_kl}.

Observe that, for all $\ell \in [J]$, using \eqref{eq:unif_upperbd_kl}, we have
\begin{align}
\sup_{\mathcal{M} \in \mathcal{R}_\varepsilon}
    \KL(\mathcal{M}(A_{\Pi^{(\ell)}})\|\mathcal{M}(A_{\Pi^{(0)}}))&\le4(e^\varepsilon-1)^2\sum_{e \in {n \choose 2}}\|P_{\Pi^{(\ell)},\,e}-P_{\Pi^{(0)},\,e}\|_{\mathrm{TV}}^2\notag\\
    &=4(e^\varepsilon-1)^2\sum_{1\le i<j\le n}\left(\Omega_{ij}^{(\ell)}-\Omega_{ij}^{(0)}\right)^2
    \label{eq:kl}
\end{align}
where $\Omega_{ij}^{(\ell)}=\theta_i\theta_j{\Pi_{i*}^{(\ell)}}^\top B^*\Pi_{j*}^{(\ell)}$ for $1\le i<j\le n$ and $0\le \ell\le J.$ Observe that for $n_0<i<j\le n,$ we have $\Omega_{ij}^{(\ell)}=\Omega_{ij}^{(0)}.$ So only contributing terms will correspond to $(i,j)$ pairs where either $0<i<j\le n_0$ and $0<i\le n_0<j\le n.$ Let us consider these two cases separately.

If $0<i<j\le n_0,$ by the construction of $\Gamma^{(\ell)}$, we have
\begin{align*}
    \Omega_{ij}^{(0)}&=\theta_i\theta_j\left(\frac{1}{K}1_K\right)^\top B^*\left(\frac{1}{K}1_K\right),\\
    \Omega_{ij}^{(\ell)}&=\theta_i\theta_j\left(\frac{1}{K}1_K+\frac{\gamma_n}{\theta_i\wedge \widetilde \theta}\Gamma_{i*}^{(\ell)}\right)^\top B^*\left(\frac{1}{K}1_K+\frac{\gamma_n}{\theta_j\wedge \widetilde \theta}\Gamma_{j*}^{(\ell)}\right)\\
    &=\Omega_{ij}^{(0)}+\frac{\theta_i\theta_j\gamma_n^2\beta_n}{(\theta_i\wedge \widetilde \theta)(\theta_j\wedge \widetilde \theta)}{\Gamma_{i*}^{(\ell)}}^\top\Gamma_{j*}^{(\ell)}.
\end{align*}
Now observe that for every $1\le i\le n_0,$ using the construction of $c_n$ in Definition \ref{def:constr_deg_cor} and \eqref{eq:F_n-condn}, we can conclude that $\theta_i/\widetilde{\theta}<c_n\le\widetilde C$ for some $\widetilde C>0.$ Thus, for every $1\le i\le n,$
\begin{equation}
    \frac{\theta_i}{\theta_i\wedge\widetilde\theta}\le\max\left\{1,\,\frac{\theta_i}{\widetilde{\theta}}\right\}\le\max\left\{1,\,\widetilde C\right\}.
    \label{eq:theta_bound}
\end{equation}
Using \eqref{eq:gamma_n} we have for $1\le i<j\le n_0,$
\begin{align*}
    \left(\Omega_{ij}^{(\ell)}-\Omega_{ij}^{(0)}\right)^2\le \gamma_n^4c'^2\beta_n^2\left({\Gamma_{i*}^{(\ell)}}^\top\Gamma_{j*}^{(\ell)}\right)^2&\le\max\{1,\,\widetilde C\}^2K^2\gamma_n^4c^2_0\beta_n^2\\
    & \lesssim \frac{K^4\max\{1,\,\widetilde C\}^2c^2_0}{(1-2p_\varepsilon)^4\beta^2_nn^2\widetilde\theta^4}.
\end{align*}
Since $\mathrm{err}_n \ll 1$ by assumption, therefore we have
\[
\frac{K^3}{(1-2p_\varepsilon)^2\beta^2_nn\tilde\theta^4} \ll 1.
\]
This in turn implies
\[
(e^\varepsilon-1)^2\sum_{1 \le i,j \le n_0}\left(\Omega_{ij}^{(\ell)}-\Omega_{ij}^{(0)}\right)^2 \lesssim_{\varepsilon_0} Kn_0\max\{1,\,\widetilde C\}^2c^2_0\lesssim c^2_0Kn_0,
\]
using $\varepsilon \in [0,\varepsilon_0]$. Therefore
\begin{align}
\label{eq:kl_bd_jn_1}
(e^\varepsilon-1)^2\sum_{e=(i,j):n_0\le i,j \le n}\|P_{\Pi^{(\ell)},e}-P_{\Pi^{(0)},e}\|^2_{\mathrm{TV}} \lesssim c^2_0Kn_0.
\end{align}

Next, for $0<i\le n_0<j\le n,$ suppose $j\in\mathcal{C}_{k},$ that is, node $j$ belongs to cluster $\mathcal{C}_{k}.$ Then we have
\begin{align*}
    \Omega_{ij}^{(0)}&=\theta_i\theta_j\left(\frac{1}{K}1_K\right)^\top B^*e_{k},\\
    \Omega_{ij}^{(\ell)}&=\theta_i\theta_j\left(\frac{1}{K}1_K+\frac{\gamma_n}{\theta_i\wedge \widetilde \theta}\Gamma_{i*}^{(\ell)}\right)^\top B^*e_{k}=\Omega_{ij}^{(0)}+\frac{\theta_i\theta_j\gamma_n\beta_n}{\theta_i\wedge \widetilde \theta}{\Gamma_{i*}^{(\ell)}}^\top e_{k}.
\end{align*}
So in this case, again due to \eqref{eq:theta_bound}, we have
\begin{align*}
    \left(\Omega_{ij}^{(\ell)}-\Omega_{ij}^{(0)}\right)^2=\frac{\theta_i^2\theta_j^2\gamma_n^2\beta_n^2}{(\theta_i \wedge \widetilde \theta)^2}\left({\Gamma_{i*}^{(\ell)}}^\top e_{k}\right)^2&\le\max\{1,\,\widetilde C\}\theta_j^2\gamma_n^2\beta_n^2\\
    & \le \max\{1,\,\widetilde C\}c^2_0\times\frac{\theta^2_jK}{n\widetilde\theta^2(1-2p_\varepsilon)^2}.
\end{align*}
Therefore, we have
\begin{align}
    (e^\varepsilon-1)^2\sum_{1 \le i<n_0\le j \le n}\left(\Omega_{ij}^{(\ell)}-\Omega_{ij}^{(0)}\right)^2 & \le \max\{1,\,\widetilde C\}c^2_0\times\frac{n_0K}{(1+e^\varepsilon)^2}\nonumber\\
    & \lesssim c^2_0Kn_0.\nonumber
\end{align}
This implies
\begin{align}
\label{eq:kl_bd_jn_2}
(e^\varepsilon-1)^2\sum_{e=(i,j):1 \le i<n_0\le j \le n}\|P_{\Pi^{(\ell)},e}-P_{\Pi^{(0)},e}\|^2_{\mathrm{TV}} \lesssim c^2_0Kn_0.
\end{align}
Plugging \eqref{eq:kl_bd_jn_1} and \eqref{eq:kl_bd_jn_2} in \eqref{eq:kl}, we can conclude that
\begin{align}
   \sup_{\mathcal{M} \in \mathcal{R}_\varepsilon} \KL(\mathcal{M}(A_{\Pi^{(\ell)}}) \| \mathcal{M}(A_{\Pi^{(0)}})) \lesssim c^2_0Kn_0.\nonumber
\end{align}
Using Jensen's inequality, this implies
\begin{align}
    \sup_{\mathcal{M} \in \mathcal{R}_\varepsilon}\frac{1}{J}\sum_{\ell=1}^{J} \KL(\mathcal{M}(A_{\Pi^{(\ell)}}) \| \mathcal{M}(A_{\Pi^{(0)}})) \lesssim c^2_0n_0K.
\end{align}
By construction, $\log J=\frac{s}{8}\log2\asymp mr\asymp n_0K$. This implies that we can choose $c_0$ small enough to ensure
\[
\sup_{\mathcal{M} \in \mathcal{R}_\varepsilon}\frac{1}{J}\sum_{j=1}^{J} \KL(\mathcal{M}(A_{\Pi^{(j)}}) \| \mathcal{M}(A_{\Pi^{(0)}})) \lesssim c_0\log J \lesssim \alpha\log J,
\]
for some constant $0<\alpha<\frac{1}{8}$. 
Therefore, we can take $J \rightarrow \infty$, to conclude \eqref{eq:fano_lowe_bound}.

\subsection{Proof of Theorem \ref{thm:lower_bound_informal} without (\ref{eq:F_n-condn})}
When \eqref{eq:F_n-condn} does not hold, we shall construct a different least favorable configuration. Let us recall that in this regime
\begin{align}
\liminf_{n \rightarrow \infty}\frac{n\left|\{i:\nu_i \le \mathrm{err}_n\}\right|}{\sum_{i:~\nu_i \ge \mathrm{err}_n}\frac{\mathrm{err}_n}{\nu_i \wedge 1}} =  \infty,
    \label{eq:comp_F_n-condn}
\end{align}
where $\nu_i=\theta_i \wedge \widetilde\theta$. Let us arrange the $\theta_i$'s in increasing order and consider
\begin{align}
\label{eq:n_0_case_2}
n_0=\max\left\{1\le i\le n:\frac{\theta_{(i)}}{\widetilde\theta}\le\mathrm{err}_n\right\}.
\end{align}
As in the case where \eqref{eq:F_n-condn}, in our constructed configuration, the last $n-n_0$ nodes will be pure nodes equally distributed among the communities such that \eqref{eq:deg_condn} still holds. Let us construct $\Pi^{(0)}$ as before using $n_0$ defined in \eqref{eq:n_0_case_2}. We construct $\Gamma^{(0)},\,\Gamma^{(1)},\,\ldots,\,\Gamma^{(J)}$ as in \eqref{eq:construct_gamma} (using Gilbert-Varshamov theorem) and choose the same $B^*.$ Next, let us consider
$$\Pi^{(j)}=\Pi^{(0)}+\frac{c_0}{K}\Gamma_{(j)},\quad0\le j\le J.$$
Also define $G^{(j)}$ for $0\le j\le J$ as in the previous regime. Observe that, since $\Pi^{(0)}$ and $B^*$ are unchanged, the regularity conditions \eqref{eq:singularity_b}, \eqref{eq:spectral_gap}, \eqref{eq:perron_root} and \eqref{eq:non_negligible_presence} hold for $G^{(0)}$ and $B^*G^{(0)}.$ 
Now note that,
\begin{align}
    \|G^{(j)}-G^{(0)}\|&\le 2\|G^{(0)}\|^{1/2}\|c_0^2K^{-2}{\Gamma^{(j)}}^\top\Theta^2\Gamma^{(j)}\|^{1/2}+\|c_0^2K^{-2}{\Gamma^{(j)}}^\top\Theta^2\Gamma^{(j)}\|.
    \label{eq:perturb_norm_2}
\end{align}
As before, one can derive
$$\|c_0^2K^{-2}{\Gamma^{(j)}}^\top\Theta^2\Gamma^{(j)}\|\lesssim\frac{c_0}{K}.$$
So one can choose $c_0$ small enough such that the regularity conditions \eqref{eq:singularity_b}, \eqref{eq:spectral_gap}, \eqref{eq:perron_root} and \eqref{eq:non_negligible_presence} continue to hold for $G^{(j)}$ and $B^*G^{(j)}$ for all $0\le j\le J.$
\paragraph{Separation of the constructed configurations}
Observe that
\begin{align}
\label{eq:no_a_1_lb}
    \mathcal{L}(\Pi^{(j)},\,\Pi^{(k)})&=\frac{1}{n}\sum_{i=1}^n\|\Pi_{i*}^{(j)}-\Pi_{i*}^{(k)}\|_1=\frac{c_0}{nK}\sum_{i=1}^{n_0}\|(\Gamma^{(j)}-\Gamma^{(k)})_{i*}\|_1\nonumber\\
    &=\frac{2c_0(K-1)n_0}{nK}\ge\frac{c_0n_0}{8n}\ge\frac{c_0}{8n}\sum_{i:~\nu_i \ge \mathrm{err}_n}\frac{\mathrm{err}_n}{\nu_i \wedge 1}\\
    &\ge \frac{c_0}{16}\int_{0}^\infty\min\left\{\frac{\mathrm{err}_n}{t\wedge1},1\right\}\text{d}F_n(t),
\end{align}
where the second last inequality follows using the definition of the regime and the last inequality follows using \eqref{eq:comp_F_n-condn}.
This shows that the newly constructed parameters $\Pi^{(j)}$ are well separated in terms of $\mathcal L(\cdot,\cdot)$ norm.

\paragraph{Controlling the divergence of the induced distributions}
By construction, we have for $0<i<j\le n_0,$ 
\begin{align*}
    \Omega_{ij}^{(0)}&=\theta_i\theta_j\left(\frac{1}{K}1_K\right)^\top B^*\left(\frac{1}{K}1_K\right)=\theta_i\theta_j\left[1-\left(1-\frac{1}{K}\right)c'\beta_n\right],\\
    \Omega_{ij}^{(\ell)}&=\theta_i\theta_j\left(\frac{1}{K}1_K+\frac{c_0}{K}\Gamma_{i*}^{(\ell)}\right)^\top B^*\left(\frac{1}{K}1_K+\frac{c_0}{K}\Gamma_{j*}^{(\ell)}\right)\\
    &=\Omega_{ij}^{(0)}+\frac{\theta_i\theta_jc_0^2}{K^2}{\Gamma_{i*}^{(\ell)}}^\top B^*\Gamma_{j*}^{(\ell)},
\end{align*}
which means
$$\left(\Omega_{ij}^{(\ell)}-\Omega_{ij}^{(0)}\right)^2\le\frac{\theta_i^2\theta_j^2c_0^4c'^2\beta_n^2}{K^4}\left({\Gamma_{i*}^{(\ell)}}^\top\Gamma_{j*}^{(\ell)}\right)^2\le\frac{\theta_i^2\theta_j^2c_0^4c'^2\beta_n^2}{K^2}\le\frac{K\theta_j^2c_0^4c'^2}{(1-2p_\varepsilon)^2n\widetilde\theta^2}$$
since all the $\theta_i$'s satisfy
\begin{equation}
    \frac{\theta_i}{\widetilde\theta}\le\frac{K^{3/2}}{(1-2p_\varepsilon)\beta_n\sqrt{n}\widetilde\theta^2},\quad\text{for }1\le i\le n_0.
    \label{eq:theta_new_bound}
\end{equation}
This implies
\begin{align}
    \sum_{0<i<j\le n_0}\left(\Omega_{ij}^{(
    \ell)}-\Omega_{ij}^{(0)}\right)^2&\le\frac{c_0^4c'^2Kn_0}{(1-2p_\varepsilon)^2}\le\frac{c_0^4c'^2}{(1-2p_\varepsilon)^2}n_0K.
    \label{eq:case2.1}
\end{align}

Next, for $0<i\le n_0<j\le n,$ suppose the node $j$ belongs to cluster $\mathcal C_k.$ Then we have
\begin{align*}
    \Omega_{ij}^{(0)}&=\theta_i\theta_j\left(\frac{1}{K}1_K\right)^\top B^*e_{k},\\
    \Omega_{ij}^{(\ell)}&=\theta_i\theta_j\left(\frac{1}{K}1_K+\frac{c_0}{K}\Gamma_{i*}^{(\ell)}\right)^\top B^*e_{k}=\Omega_{ij}^{(0)}+\frac{\theta_i\theta_jc_0c'\beta_n}{K}{\Gamma_{i*}^{(\ell)}}^\top e_{k},
\end{align*}
which means
$$\left(\Omega_{ij}^{(\ell)}-\Omega_{ij}^{(0)}\right)^2\le\frac{\theta_i^2\theta_j^2c_0^2c'^2\beta_n^2}{K^2}\left({\Gamma_{i*}^{(\ell)}}^\top e_k\right)^2\le\frac{\theta_i^2\theta_j^2c_0^2c'^2\beta_n^2}{K^2}\lesssim\frac{K\theta_j^2c_0^4c'^2}{(1-2p_\varepsilon)^2n\widetilde\theta^2},$$
again due to \eqref{eq:theta_new_bound} and the uniform bound of all $\theta_i$'s. Thus, we have
\begin{align}
    \sum_{0<i\le n_0<j\le n}\left(\Omega_{ij}^{(\ell)}-\Omega_{ij}^{(0)}\right)^2\le\frac{c_0^4c'^2}{(1-2p_\varepsilon)^2}nK.
    \label{eq:case2.2}
\end{align}

Plugging in the bounds in \eqref{eq:case2.1} and \eqref{eq:case2.2} in \eqref{eq:kl}, using \eqref{eq:unif_upperbd_kl} we get 
\begin{align*}
&\sup_{\mathcal{M} \in \mathcal{R}_\varepsilon}\KL(\mathcal{M}(A_{\Pi^{(\ell)}})\|\mathcal{M}(A_{\Pi^{(0)}}))\\
&\le4(e^\varepsilon-1)^2\left[\sum_{0<i<j\le n_0}\left(\Omega_{ij}^{(l)}-\Omega_{ij}^{(0)}\right)^2+\sum_{0<i\le n_0<j\le n}\left(\Omega_{ij}^{(l)}-\Omega_{ij}^{(0)}\right)^2\right]\\
    &\le c^2_0c'^2(1\vee \varepsilon_0)^2nK.
\end{align*}
In the above display, the last line follows from the fact that $\varepsilon \in [0,\varepsilon_0]$. Now, we can use the construction of the least favorable configuration and take $c_0$ small enough to ensure
$$\sup_{\mathcal{M} \in \mathcal{R}_\varepsilon}\frac{1}{J}\sum_{\ell=1}^J\KL(\mathcal{M}(A_{\Pi^{(\ell)}})\|\mathcal{M}(A_{\Pi^{(0)}})) \le\alpha \log J,$$
for some constant $0<\alpha<\frac{1}{8}$. This implies \eqref{eq:fano_lowe_bound} in this subcase using \eqref{eq:bound_kl_fano} and \eqref{eq:no_a_1_lb}.

\section{Analysis of the estimation risk of {\sf PriME} Algorithm}
\subsection{Proof of Theorem \ref{thm:l_infty_pertb}}  
To establish Theorem \ref{thm:l_infty_pertb}, we adapt the leave-one-out technique developed in \citet{ke2022optimal} to control eigenvector perturbations.  
Let $\pW = \pM - \mathbb{E}[\pM]$ for {$\pM$ defined in (3.2) of the main text}.  
Observe that  
\[
\pM = \Omega - \mathsf{diag}(\Omega) + \pW.
\]  
For each $i \in [n]$, define the \emph{leave-one-out residual matrix} $\pW^{(i)} \in \mathbb{R}^{n \times n}$ by  
\begin{equation}
\label{Eq:leave_one_out_w}
(\pW^{(i)})_{k\ell} =
\begin{cases}
(\pW)_{k\ell}, & \text{if } k \neq i \text{ and } \ell \neq i, \\
0, & \text{otherwise.}
\end{cases}
\end{equation}
Correspondingly, we define the \emph{leave-one-out matrix}
\[
\pM^{(i)} = \Omega - \mathsf{diag}(\Omega) + \pW^{(i)}.
\]
Let $\widehat{\lambda}_1, \ldots, \widehat{\lambda}_K$ denote the $K$ largest eigenvalues of $\pM$, and let $\widehat{\Xi}$ denote the corresponding matrix of eigenvectors.  
Similarly, denote by $(\widehat{\lambda}^{(i)}, \widehat\Xi^{(i)})$ the leading $K$ eigenvalues and eigenvectors of the leave-one-out matrix $\pM^{(i)}$.

\subsubsection{Eigenstructure of the Population Mean}
Let us consider the population mean matrix 
\[
\mathbb E[\pM]=\Omega-\mathsf{diag}(\Omega).
\]
We can show the following property of the eigenvalues $\Omega$ using Lemma C.2 of \citet{JIN2023}. 

\begin{lem}[Lemma C.2, \citet{JIN2023}]
\label{lem:pop_eigval}
Under Assumptions \eqref{eq:singularity_b}, \eqref{eq:spectral_gap}, \eqref{eq:perron_root}, and \eqref{eq:non_negligible_presence}, the eigenvalues of $\Omega$ satisfy the following:
\begin{align}
    & K^{-1}\|\theta\|^2 \lesssim \lambda_1 \lesssim \|\theta\|^2. \ \text{Furthermore, if } \beta_n = o(1), \text{ then } \lambda_1 \asymp \|\theta\|^2, \label{eq:lambda1_bounds} \\
    & \lambda_1 - |\lambda_2| \asymp \lambda_1, \label{eq:spectral_gap_omega} \\
    & \lambda_k \asymp \beta_n K^{-1}\|\theta\|^2, \quad \text{for all } 2 \le k \le K. \label{eq:lambda_k_bounds}
\end{align}
\end{lem}
Next, we examine the eigenstructure of $\Omega$. The behavior of its eigenvectors is characterized by the following lemma.

\begin{lem}[Lemma C.3, \citet{JIN2023}]
\label{lem:true_eig_vect}
Under Assumptions \eqref{eq:singularity_b}, \eqref{eq:spectral_gap}, \eqref{eq:perron_root}, and \eqref{eq:non_negligible_presence}, let $\Xi \in \mathbb{R}^{n \times K}$ denote the eigenvector matrix of $\Omega$.  
If the sign of the leading eigenvector $\Xi_{*1}$ is chosen such that $\sum_{i=1}^n \Xi_{i1} > 0$, then all entries of $\Xi_{*1}$ are positive and satisfy $\Xi_{i1} \asymp \theta_i / \|\theta\|$ for all $i \in [n]$.  
Furthermore, for each $i \in [n]$, the remaining coordinates satisfy 
\[
\|\Xi_{i,2:K}\| \lesssim \sqrt{K} \, \frac{\theta_i}{\|\theta\|}.
\]
\end{lem}

\subsection{Perturbation of the sample eigenvectors}
To prove the perturbation bounds for the sample eigenvectors of $\pM$, let us consider the following lemma characterizing the deviation of $\pM$ from $\Omega$ in terms of the spectral norm.

\begin{lem}
\label{lem:dev_h_0}
Assume \eqref{eq:singularity_b}, \eqref{eq:spectral_gap}, \eqref{eq:perron_root}, and \eqref{eq:non_negligible_presence}. 
Suppose further that $\theta_i = O(1)$ for all $i \in [n]$ and $\varepsilon \in [0, \varepsilon_0]$. 
If
\begin{align}
\label{eq:imp_condn}
    \frac{K^3 (\log n)^2}{n \tilde{\theta}^4\beta_n^2 (1 - 2p_\varepsilon)^2} \ll 1
\end{align}
for all $n \ge n_0$ and some fixed $n_0 > 0$, where $\tilde{\theta}=\|\theta\|_2/\sqrt{n}$,
then
\[
\|\pM - \Omega\| \lesssim \frac{\sqrt{n}}{1 - 2p_\varepsilon},
\]
with probability at least $1 - o(n^{-3})$.
\end{lem}

\begin{proof}
We begin by decomposing
\[
\|\pM - \Omega\| \le \|\pM - \Omega - \mathsf{diag}(\Omega)\| + \|\mathsf{diag}(\Omega)\|.
\]
Since $\theta_i = O(1)$ for all $i \in [n]$, it follows that 
\[
\|\mathsf{diag}(\Omega)\| \lesssim 1.
\]
Next, by definition of $\pW = \pM - \mathbb{E}[\pM]$, we have
\[
\mathbb{E}\!\left[(\pW)_{ij})^2\right] 
    \lesssim \frac{1}{(1 - 2p_\varepsilon)^2}
    \left\{\frac{1}{1 + e^{\varepsilon}} + \theta_i \theta_j\right\} 
    \lesssim \frac{1}{(1 - 2p_\varepsilon)^2}.
\]
Consequently,
\[
\tilde{\sigma}^2 
    = \max_{i} \sum_{j \neq i} \mathbb{E}\!\left[(\pW)_{ij})^2\right] 
    \lesssim \frac{n}{(1 - 2p_\varepsilon)^2},
    \quad\text{and}\quad
    \sigma_\star 
    = \max_{i,j} |(\pW)_{ij}| 
    \lesssim \frac{1}{1 - 2p_\varepsilon}.
\]
Applying Corollary 3.12 of \citet{10.1214/15-AOP1025}, we obtain
\[
\mathbb{P}\!\left[\|\pW\| \ge C \tilde{\sigma} + t_\star\right] 
    \le n \exp\!\left(-\frac{t_\star^2}{c \sigma_\star^2}\right).
\]
Choosing $t_\star = c' \sqrt{\log n}\,\sigma_\star$ and invoking a union bound argument yield that, with probability at least $1 - o(n^{-3})$,
\[
\|\pM - \Omega - \mathsf{diag}(\Omega)\| 
    \lesssim \frac{\sqrt{n}}{1 - 2p_\varepsilon}.
\]
Combining the above bounds completes the proof.
\end{proof}
Next, we consider the $L_2$ perturbation of the empirical eigenvectors of $\pM$ around those of $\Omega$. 

\begin{lem}
\label{lem:l_2_perturb_eig_vect}
Let $\widehat{\Xi} \in \mathbb{R}^{n \times K}$ denote the matrix of the top $K$ eigenvectors of $\pM$, and let $\Xi$ denote the corresponding eigenvector matrix of $\Omega$. 
Suppose that $\theta_i = O(1)$ for all $i \in [n]$ and $\varepsilon \in [0, \varepsilon_0]$. 
Under Assumptions \eqref{eq:singularity_b}, \eqref{eq:spectral_gap}, \eqref{eq:perron_root}, \eqref{eq:non_negligible_presence}, and condition \eqref{eq:imp_condn}, there exist $\omega \in \{-1, 1\}$ and an orthogonal matrix $O \in \mathbb{R}^{(K-1) \times (K-1)}$ such that
\[
\|\Xi_{*1} - \omega \widehat{\Xi}_{*1}\|_2 
    \lesssim \frac{K\sqrt{n}}{(1 - 2p_\varepsilon)\|\theta\|^2},
\]
and
\[
\|\Xi_{*,2:K} - \widehat{\Xi}_{*,2:K}O\|_F 
    \lesssim \frac{K^{3/2}\sqrt{n}}{(1 - 2p_\varepsilon)\beta_n\|\theta\|^2},
\]
with probability at least $1 - o(n^{-3})$.
\end{lem}

\begin{proof}
By Weyl’s inequality \citep{bhatia1997matrix}, 
\begin{align}
\label{eq:prtb_eig_val}
|\widehat{\lambda}_k - \lambda_k| 
    \le \|\pM - \Omega\| 
    \lesssim \frac{\sqrt{n}}{1 - 2p_\varepsilon},
\end{align}
where the last inequality follows from Lemma~\ref{lem:dev_h_0} and the assumption $\theta_i = O(1)$ for all $i \in [n]$.

Using Lemmas~\ref{lem:pop_eigval} and~\ref{lem:dev_h_0}, we have
\begin{align*}
|\widehat{\lambda}_2 - \lambda_1| 
&\le |\widehat{\lambda}_2 - \lambda_2| + |\lambda_1 - \lambda_2| \\
&\lesssim \frac{\sqrt{n}}{1 - 2p_\varepsilon} + \|\theta\|^2,
\end{align*}
with probability at least $1 - o(n^{-3})$. 
Under condition~\eqref{eq:imp_condn}, this implies
\[
|\widehat{\lambda}_2 - \lambda_1| \lesssim \|\theta\|^2
\quad \text{and} \quad
|\widehat{\lambda}_K| \lesssim \beta_n K^{-1} \|\theta\|^2,
\]
both holding with probability at least $1 - o(n^{-3})$.

Applying Lemma~D.3 of \citet{JIN2023}, we obtain
\[
\|\Xi_{*1}\Xi_{*1}^\top - \widehat{\Xi}_{*1}\widehat{\Xi}_{*1}^\top\| 
    \lesssim \frac{K\sqrt{n}}{(1 - 2p_\varepsilon)\|\theta\|^2},
\]
and
\[
\max_{t \in \{0,1\}}
\Big\|
\Xi^{(t)}_{*,2:K}(\Xi^{(t)}_{*,2:K})^\top
    - \widehat{\Xi}^{(t)}_{*,2:K}(\widehat{\Xi}^{(t)}_{*,2:K})^\top
\Big\|
    \lesssim
    \frac{K\sqrt{n}}{(1 - 2p_\varepsilon)\beta_n\|\theta\|^2},
\]
where for both $\Xi$ and $\widehat{\Xi}$, the superscript $t=0$ corresponds to eigenvectors associated with positive eigenvalues, and $t=1$ corresponds to those with negative eigenvalues.

Following the argument in the proof of Lemma~D.2 of \citet{JIN2023}, it follows that there exist $\omega \in \{-1, 1\}$ and an orthogonal matrix $O \in \mathbb{R}^{(K-1) \times (K-1)}$ such that
\[
\|\Xi_{*1} - \omega \widehat{\Xi}_{*1}\|_2 
    \lesssim \frac{K\sqrt{n}}{(1 - 2p_\varepsilon)\|\theta\|^2},
\]
and
\[
\|\Xi_{*,2:K} - \widehat{\Xi}_{*,2:K}O\|_F 
    \lesssim \frac{K^{3/2}\sqrt{n}}{(1 - 2p_\varepsilon)\beta_n\|\theta\|^2},
\]
with probability at least $1 - o(n^{-3})$.
\end{proof}
\subsection{Entrywise rate for the sample eigenvectors}

To characterize the estimation rate of $\widehat R_{i*}$ for each $i \in [n]$, it suffices to control the entrywise error of $\widehat\Xi_{*1}$ and the rowwise error of $\widehat\Xi_{*,2:K}$.  
In this subsection, we establish an $L_\infty$ perturbation bound for $\widehat\Xi_{*1}$; the corresponding rowwise bound for $\widehat\Xi_{*,2:K}$ follows from analogous arguments and is therefore omitted.

Starting from the eigenequation
\[
\widehat{\lambda}_1 \widehat{\Xi}_{*1}
= \sum_{k=1}^K \lambda_k \big(\widehat\Xi_{*1}^\top \Xi_{*k}\big)\,\Xi_{*k}
- \mathrm{diag}(\Omega)\,\widehat{\Xi}_{*1}
+ \pW\,\widehat{\Xi}_{*1},
\]
we extract the $i$th coordinate (noting that the diagonal term contributes $\Omega_{ii}\widehat\Xi_{i1}$):
\begin{equation}\label{eq:coord-decomp}
\widehat \Xi_{i1}
= \frac{\lambda_1}{\widehat\lambda_1}\big(\widehat \Xi_{*1}^\top \Xi_{*1}\big)\Xi_{i1}
+ \sum_{k=2}^K \frac{\lambda_k}{\widehat\lambda_1}\big(\widehat \Xi_{*1}^\top \Xi_{*k}\big)\Xi_{ik}
- \frac{\Omega_{ii}}{\widehat{\lambda}_1}\widehat \Xi_{i1}
+ \frac{e_i^\top \pW\,\widehat\Xi_{*1}}{\widehat\lambda_1}.
\end{equation}

By Theorem~D.3 of \citet{ke2022optimal}, there exists $\omega \in \{-1,1\}$ (as in Lemma~\ref{lem:l_2_perturb_eig_vect}) such that
\[
\big|\widehat \Xi_{*1}^\top \Xi_{*1}-\omega\big|
\;\lesssim\; \frac{K^2 n}{(1-2p_\varepsilon)^2\|\theta\|^4},
\qquad
\big\|\widehat \Xi_{*1}^\top \Xi_{*,2:K}\big\|_2
\;\lesssim\; \frac{K\sqrt{n}}{(1-2p_\varepsilon)\beta_n\|\theta\|^2}.
\]
Combining these with Weyl's inequality, Lemma~\ref{lem:pop_eigval}, Lemma~\ref{lem:dev_h_0} and condition~\eqref{eq:imp_condn}, and using $\theta_i=O(1)$ for all $i\in[n]$, we obtain with probability at least $1-o(n^{-3})$,
\begin{align}
\label{eq:diff_l_inty_prt}
\big|\widehat\Xi_{i1}-\omega\,\Xi_{i1}\big|
&\lesssim
\frac{|\widehat \lambda_1-\lambda_1|}{\widehat \lambda_1}\,|\Xi_{i1}|
\;+\;
\frac{\max_{k\ge2}\lambda_k}{\widehat \lambda_1}
\cdot
\frac{K\sqrt{n}}{(1-2p_\varepsilon)\beta_n\|\theta\|^2}
\,\|\Xi_{i,2:K}\|_2
\;+\;
\frac{e_i^\top\pW\,\widehat\Xi_{*1}}{\widehat\lambda_1}
\\[4pt]
&\lesssim
\frac{\sqrt{n}K}{(1-2p_\varepsilon)\beta_n\|\theta\|^2}\cdot \frac{\theta_i}{\|\theta\|}
\;+\;
\frac{e_i^\top\pW\,\widehat\Xi_{*1}}{\widehat\lambda_1}.
\end{align}
Hence, the remaining task is to control the fluctuation term
\[
\frac{e_i^\top\pW\,\widehat\Xi_{*1}}{\widehat\lambda_1}.
\]
A direct Bernstein bound is not immediately applicable since $\widehat\Xi_{*1}$ depends on $\pW$.  
We circumvent this dependence via a leave-one-out argument.  
Let $\pW^{(i)}$ be the matrix defined in~\eqref{Eq:leave_one_out_w}, and let $\widehat\Xi_{*1}^{(i)}$ be the leading eigenvector of the corresponding leave-one-out matrix.  
Decompose
\begin{align}
\label{eq:decomp_norm}
e_i^\top \pW\,\widehat\Xi_{*1}
= e_i^\top \pW\,\Xi_{*1}
\;+\;
e_i^\top \pW\,\big(\widehat\Xi_{*1}^{(i)}-\Xi_{*1}\big)
\;+\;
e_i^\top \pW\,\big(\widehat\Xi_{*1}-\widehat\Xi_{*1}^{(i)}\big).
\end{align}

For the first term,
\[
e_i^\top \pW\,\Xi_{*1} = \sum_{j\neq i}(\pW)_{ij}\Xi_{j1}.
\]
The variance of $\sum_{j\neq i}(\pW)_{ij}\Xi_{j1}$ satisfies
\[
\sum_{j\neq i}\mathrm{Var}\!\left((\pW)_{ij}\Xi_{j1}\right)
\lesssim
\sum_{j\neq i}\left\{\frac{1}{1+e^\varepsilon}+\theta_i\theta_j\right\}
\frac{\Xi_{j1}^2}{(1-2p_\varepsilon)^2}
\lesssim
\frac{1}{(1-2p_\varepsilon)^2},
\]
and
\[
\max_{i \neq j}|(\pW)_{ij}\Xi_{j1}| \lesssim \frac{\theta_j}{(1-2p_\varepsilon)\|\theta\|_2}.
\]
Applying Bernstein’s inequality yields along with \eqref{eq:imp_condn}, with probability at least $1-o(n^{-3})$,
\begin{align}
\label{eq:prtb_eqn_1}
\max_{i}\left|e_i^\top \pW\,\Xi_{*1}\right|
\lesssim
\frac{\sqrt{\log n}}{(1-2p_\varepsilon)}
+\frac{(\log n)\theta_j}{(1-2p_\varepsilon)\|\theta\|_2}
\lesssim
\frac{(\log n)}{(1-2p_\varepsilon)}.
\end{align}

Next, consider $e_i^\top \pW\,\big(\widehat\Xi_{*1}^{(i)}-\Xi_{*1}\big)$ for $i\in[n]$.  
By construction, $\widehat\Xi_{j1}^{(i)}-\Xi_{j1}$ is independent of $(\pW)_{ij}$ for all $i\neq j$, so Bernstein’s inequality applies again.  
We have
\[
\sum_{j\neq i}\mathrm{Var}\!\left((\pW)_{ij}\,\big(\widehat\Xi_{j1}^{(i)}-\Xi_{j1}\big)\right)
\lesssim
\sum_{j\neq i}\left\{\frac{1}{1+e^\varepsilon}+\theta_i\theta_j\right\}
\frac{\big(\widehat\Xi_{j1}^{(i)}-\Xi_{j1}\big)^2}{(1-2p_\varepsilon)^2}
\lesssim
\frac{1}{(1-2p_\varepsilon)^2},
\]
and
\[
\max_{i \neq j}\left|(\pW)_{ij}\big(\widehat\Xi_{j1}^{(i)}-\Xi_{j1}\big)\right|
\lesssim
\frac{1}{(1-2p_\varepsilon)1}
\,\big\|\widehat\Xi_{*1}^{(i)}-\Xi_{*1}\big\|_\infty.
\]
Thus, by Bernstein’s inequality,
\begin{align}
\label{eq:prtb_eqn_2}
\left|e_i^\top \pW\,\big(\widehat\Xi_{*1}^{(i)}-\Xi_{*1}\big)\right|
&\lesssim
\frac{\sqrt{\log n}}{(1-2p_\varepsilon)}
+(\log n)\frac{\big\|\widehat\Xi_{*1}^{(i)}-\Xi_{*1}\big\|_\infty}{(1-2p_\varepsilon)}\nonumber\\
&\lesssim
\frac{\sqrt{\log n}}{(1-2p_\varepsilon)}
+(\log n)\frac{\big\|\Xi_{*1}-\widehat\Xi_{*1}\big\|_\infty}{(1-2p_\varepsilon)}
+(\log n)\frac{\big\|\widehat\Xi_{*1}^{(i)}-\widehat\Xi_{*1}\big\|_2}{(1-2p_\varepsilon)}.
\end{align}
Next, we focus on bounding $\big\|\widehat\Xi_{*1}^{(i)}-\widehat\Xi_{*1}\big\|_2$. In that direction, observe that
\[
\big\|\pW-\pW^{(i)}\big\|
= \big\|e_i(\pW)_{i*}^\top + (\pW)_{i*}e_i^\top\big\|,
\]
since $(\pW)_{ii}=0$ for all $i \in [n]$.
Therefore, by Lemma 2 of \cite{abbe2020entrywise}, we have
\[
\big\|\widehat\Xi_{*1}^{(i)}-\widehat\Xi_{*1}\big\|_2 \lesssim \frac{\left\|(e_i(\pW)_{i*}^\top + (\pW)_{i*}e_i^\top)\widehat\Xi_{*1}\right\|_2}{\widehat{\lambda}_1-|\widehat{\lambda}_2|}.
\]
By the Weyl's inequality, Lemma \ref{lem:dev_h_0} and \eqref{eq:imp_condn}, we have
\[
\widehat{\lambda}_1-|\widehat{\lambda}_2| \gtrsim K^{-1}\|\theta\|^2_2.
\]
Therefore, we have
\begin{align}
\label{eq:decomp_leave_1}
    \big\|\widehat\Xi_{*1}^{(i)}-\widehat\Xi_{*1}\big\|_2 &\lesssim \frac{K}{\|\theta\|^2_2}\left\{\left|(\pW)_{i*}^\top\widehat \Xi_{*1}\right|+\left\|(\pW)_{i*}\right\|_2|\widehat \Xi_{i1}|\right\}\nonumber\\
    & \lesssim \frac{K}{\|\theta\|^2_2}\left\{\left|(\pW)_{i*}^\top\widehat \Xi_{*1}\right|+\left\|(\pW)_{i*}\right\|_2|\widehat \Xi_{i1}-\Xi_{i1}|+\left\|(\pW)_{i*}\right\|_2|\Xi_{i1}|\right\}.
\end{align}
By Bernstein’s inequality,
\[
\sum_{j\neq i}\left\{((\pW)_{ij})^2-\mathbb{E}[((\pW)_{ij})^2]\right\}
\lesssim
\frac{\sqrt{n}}{(1-2p_\varepsilon)}
+\frac{\log n}{(1-2p_\varepsilon)},
\]
which implies
\begin{align}
\label{eq:l_2_nrm_res_1}
\|(\pW)_{i*}\|_2
\lesssim
\frac{\sqrt{n}}{(1-2p_\varepsilon)},
\quad \text{with probability } 1-o(n^{-3}).
\end{align}
The above inequality, along with Lemma \ref{lem:true_eig_vect}
\begin{align}
\label{eq:leave_one_out_l_2_prtb}
    \big\|\widehat\Xi_{*1}^{(i)}-\widehat\Xi_{*1}\big\|_2 &\lesssim \frac{K}{\|\theta\|^2_2}\left\{\left|(\pW)_{i*}^\top\widehat \Xi_{*1}\right|+\frac{\sqrt{n}}{(1-2p_\varepsilon)}|\widehat \Xi_{i1}-\Xi_{i1}|+\frac{\sqrt{n}}{(1-2p_\varepsilon)}\frac{\theta_i}{\|\theta\|_2}\right\}.
\end{align}
Next, we focus on the first term. In that direction, we have
\begin{align}
    \left|(\pW)_{i*}^\top\widehat \Xi_{*1}\right| & \lesssim \left|(\pW)_{i*}^\top\Xi_{*1}\right|+\left|(\pW)_{i*}^\top\left(\widehat \Xi^{(i)}_{*1}-\Xi_{*1}\right)\right|+\left|(\pW)_{i*}^\top\left(\widehat \Xi^{(i)}_{*1}-\widehat\Xi_{*1}\right)\right|\\
    & \lesssim \left|(\pW)_{i*}^\top\Xi_{*1}\right|+\left|(\pW)_{i*}^\top\left(\widehat \Xi^{(i)}_{*1}-\Xi_{*1}\right)\right|+\frac{\sqrt{n}}{(1-2p_\varepsilon)}\left\|\widehat\Xi_{*1}^{(i)}-\widehat\Xi_{*1}\right\|_2.
\end{align}
By \eqref{eq:prtb_eqn_1} and \eqref{eq:prtb_eqn_2}, we have
\begin{align}
    &\left|(\pW)_{i*}^\top\Xi_{*1}\right|+\left|(\pW)_{i*}^\top\left(\widehat \Xi^{(i)}_{*1}-\Xi_{*1}\right)\right|\\
    & \lesssim \frac{(\log n)}{(1-2p_\varepsilon)}+(\log n)\frac{\big\|\Xi_{*1}-\widehat\Xi_{*1}\big\|_\infty}{(1-2p_\varepsilon)}
+(\log n)\frac{\big\|\widehat\Xi_{*1}^{(i)}-\widehat\Xi_{*1}\big\|_2}{(1-2p_\varepsilon)}.
\end{align}
Therefore,
\begin{align}
\label{eq:decomp_leave_2}
\left|(\pW)_{i*}^\top\widehat \Xi_{*1}\right| &\lesssim \frac{(\log n)}{(1-2p_\varepsilon)}+(\log n)\frac{\big\|\Xi_{*1}-\widehat\Xi_{*1}\big\|_\infty}{(1-2p_\varepsilon)}
+(\log n)\frac{\big\|\widehat\Xi_{*1}^{(i)}-\widehat\Xi_{*1}\big\|_2}{(1-2p_\varepsilon)}\\
& \quad +\frac{\sqrt{n}}{(1-2p_\varepsilon)}\left\|\widehat\Xi_{*1}^{(i)}-\widehat\Xi_{*1}\right\|_2\\
& \lesssim \frac{(\log n)}{(1-2p_\varepsilon)}+(\log n)\frac{\big\|\Xi_{*1}-\widehat\Xi_{*1}\big\|_\infty}{(1-2p_\varepsilon)}
+\frac{\sqrt{n}}{(1-2p_\varepsilon)}\big\|\widehat\Xi_{*1}^{(i)}-\widehat\Xi_{*1}\big\|_2.
\end{align}
Observe that since $\beta_n \in (0,1)$ by (2.3), we can conclude using \eqref{eq:imp_condn}, that
\[
\frac{K\sqrt{n}}{(1-2p_\varepsilon)\|\theta\|^2_2} \le \frac{K^3 (\log n)}{(1-2p_\varepsilon)\beta_n\sqrt{n}\widetilde{\theta}^2} \ll 1.
\]
This in turn implies
\[
\frac{K\sqrt{n}}{(1-2p_\varepsilon)\|\theta\|^2_2}\big\|\widehat\Xi_{*1}^{(i)}-\widehat\Xi_{*1}\big\|_2 \ll \big\|\widehat\Xi_{*1}^{(i)}-\widehat\Xi_{*1}\big\|_2.
\]
Plugging in \eqref{eq:decomp_leave_2} in \eqref{eq:decomp_leave_1} and using the above equation, Lemma \ref{lem:true_eig_vect} along with \eqref{eq:l_2_nrm_res_1}, we have
\begin{align}
\label{eq:prtb_eqn_3}
    \big\|\widehat\Xi_{*1}^{(i)}-\widehat\Xi_{*1}\big\|_2 &\lesssim \frac{K}{\|\theta\|^2_2}\Bigg\{\frac{(\log n)}{(1-2p_\varepsilon)}+(\log n)\frac{\big\|\Xi_{*1}-\widehat\Xi_{*1}\big\|_\infty}{(1-2p_\varepsilon)}\\
    &~~~~~~~~~~~~~~~+\frac{\sqrt{n}}{(1-2p_\varepsilon)}|\widehat \Xi_{i1}-\Xi_{i1}|+\frac{\sqrt{n}}{(1-2p_\varepsilon)}\frac{\theta_i}{\|\theta\|_2}\Bigg\}.
\end{align}
Combining \eqref{eq:decomp_norm} with \eqref{eq:prtb_eqn_1}-\eqref{eq:prtb_eqn_3} and \eqref{eq:imp_condn}, we can get
\begin{align}
e_i^\top \pW\,\widehat\Xi_{*1} &\lesssim \frac{(\log n)}{(1-2p_\varepsilon)}+(\log n)\frac{\big\|\Xi_{*1}-\widehat\Xi_{*1}\big\|_\infty}{(1-2p_\varepsilon)}\\
    &~~~~~~~~~~~~~~~+\frac{K}{\|\theta\|^2_2}\frac{\sqrt{n}}{(1-2p_\varepsilon)}|\widehat \Xi_{i1}-\Xi_{i1}|+\frac{\sqrt{n}}{(1-2p_\varepsilon)}\frac{\theta_i}{\|\theta\|_2}\frac{K}{\|\theta\|^2_2}.
\end{align}
This further implies using \eqref{eq:diff_l_inty_prt}, we have
\begin{align}
    \big|\widehat\Xi_{i1}-\omega\,\Xi_{i1}\big| & \lesssim \frac{K~(\log n)}{(1-2p_\varepsilon)\beta_n\|\theta\|^2_2}+\frac{K(\log n)}{\beta_n\|\theta\|^2_2}\frac{\big\|\Xi_{*1}-\widehat\Xi_{*1}\big\|_\infty}{(1-2p_\varepsilon)}+\frac{K\sqrt{n}}{(1-2p_\varepsilon)\beta_n\|\theta\|^2_2}\frac{\theta_i}{\|\theta\|_2}.
\end{align}
We can divide both sides by $\beta_n\|\theta\|^2_2(1-2p_\varepsilon)$, and take maximum over $i\in [n]$, to conclude that
\begin{align}
\frac{K(\log n)}{\beta_n\|\theta\|^2_2}\frac{\big\|\Xi_{*1}-\widehat\Xi_{*1}\big\|_\infty}{(1-2p_\varepsilon)} &\lesssim \frac{K(\log n)^{2}}{(1-2p_\varepsilon)^2\beta^2_n\|\theta\|^4}+\frac{K\sqrt{n}(\log n)}{(1-2p_\varepsilon)^2\beta^2_n\|\theta\|^4}\frac{\theta_i}{\|\theta\|^2}\\
& \lesssim \frac{K\,(\log n)}{(1-2p_\varepsilon)\beta_n\|\theta\|^2_2}+\frac{K\sqrt{n}}{(1-2p_\varepsilon)\beta_n\|\theta\|^2_2}\frac{\theta_i}{\|\theta\|_2},
\end{align}
where the last inequality follows by \eqref{eq:imp_condn}. Therefore,
\begin{align}
    \big|\widehat\Xi_{i1}-\omega\,\Xi_{i1}\big| & \lesssim \frac{K\,(\log n)}{(1-2p_\varepsilon)\beta_n\|\theta\|^2_2}+\frac{K\sqrt{n}}{(1-2p_\varepsilon)\beta_n\|\theta\|^2_2}\frac{\theta_i}{\|\theta\|_2}\\
    & \lesssim \frac{K\,(\log n)}{(1-2p_\varepsilon)\beta_n\|\theta\|^2_2}\left(1+\frac{\theta_i}{\widetilde \theta}\right), \quad \mbox{where $\widetilde{\theta}=\|\theta\|_2/\sqrt{n}$.}
\end{align}

Therefore, we get the following lemma.

\begin{lem}
\label{lem:l_infty_pertb_lead_suppl}
Suppose that $\theta_i = O(1)$ for all $i \in [n]$ and $\varepsilon \in [0, \varepsilon_0]$.  
Under Assumptions~\eqref{eq:singularity_b}, \eqref{eq:spectral_gap}, \eqref{eq:perron_root}, \eqref{eq:non_negligible_presence}, and condition~\eqref{eq:imp_condn}, there exist $\omega \in \{-1, 1\}$ (as in Lemma~\ref{lem:l_2_perturb_eig_vect}) such that with probability greater than $1-o(n^{-2})$, we have
\[
\big|\widehat\Xi_{i1}-\omega\,\Xi_{i1}\big|
\;\lesssim\;
\frac{K\,(\log n)}{(1-2p_\varepsilon)\beta_n\|\theta\|^2_2}\left(1+\frac{\theta_i}{\widetilde \theta}\right),
\]
where $\widetilde{\theta}=\|\theta\|_2/\sqrt{n}$.
\end{lem}


Proceeding similarly, we can also show the following lemma.

\begin{lem}
\label{lem:l_infty_pertb_rest_suppl}
Suppose that $\theta_i = O(1)$ for all $i \in [n]$ and $\varepsilon \in [0, \varepsilon_0]$.  
Under Assumptions~\eqref{eq:singularity_b}, \eqref{eq:spectral_gap}, \eqref{eq:perron_root}, \eqref{eq:non_negligible_presence}, and condition~\eqref{eq:imp_condn}, there exist an orthogonal matrix $O \in \mathbb{R}^{(K-1)\times(K-1)}$ (as in Lemma~\ref{lem:l_2_perturb_eig_vect}) such that with probability greater than $1-o(n^{-2})$, we have
\[
\big\|\widehat\Xi_{i,2:K}-\Xi_{i,2:K}O\big\|_2
\;\lesssim\;
\frac{K^{3/2}\,(\log n)}{(1-2p_\varepsilon)\beta_n\|\theta\|^2_2}\left(1+\frac{\theta_i}{\widetilde \theta}\right),
\]
where $\widetilde{\theta}=\|\theta\|_2/\sqrt{n}$.
\end{lem}

\begin{proof}[Proof of Theorem \ref{thm:l_infty_pertb}]
    The proof of Theorem \ref{thm:l_infty_pertb} follows directly from Lemma \ref{lem:l_infty_pertb_lead_suppl} and \ref{lem:l_infty_pertb_rest_suppl}.
\end{proof}

\subsection{Perturbation of the SCORE normalized eigenvectors}
Let us consider the score normalized \cite{JIN2023} sample eigenvectors $\widehat{R}_{*2},\ldots,\widehat{R}_{*K}$ defined in \eqref{eq:score_norm_eig_vect}. In this section, we shall characterize the row-wise perturbation of the matrix $\widehat R$ from the similar matrix $R$ constructed by score-normalizing the population eigenvectors.

\begin{proof}[Proof of Corollary \ref{cor:concentration_of_R}]
    We can choose $\widehat{\Xi}$ in such a way such that $\omega$ in Lemma \ref{lem:l_infty_pertb_lead_suppl} is $1$. We shall only consider the nodes $i \in S_n(c_0)$. For these nodes, 
    \begin{equation}
    \label{eq:s_2_c_0}
    \theta_i \gtrsim \frac{K\,(\log n)}{(1-2p_\varepsilon)\beta_n\|\theta\|_2}.
    \end{equation}
    Therefore using \eqref{eq:imp_condn} and Lemma \ref{lem:true_eig_vect}, we have
\begin{align}
\label{eq:comp_eig_vect}
|\widehat{\Xi}_{i1}| \asymp \frac{\theta_i}{\|\theta\|_2}, \quad \mbox{and} \quad \|\widehat{\Xi}_{i,2:K}\| \asymp \frac{\sqrt{K}\theta_i}{\|\theta\|_2},
\end{align}
with probability greater than $1-o(n^{-2})$. By definition,
\begin{align}
\label{eq:prtb_r}
    \|\widetilde O\widehat{R}_{i*}-R_{i*}\|_2 & = \|\widehat{\Xi}_{i,2:K}\widetilde O/\widehat{\Xi}_{i,1}-\Xi_{i,2:K}/\Xi_{i,1}\|_2\\
    &\le \frac{\|\widehat{\Xi}_{i,2:K}\widetilde O-\Xi_{i,2:K}\|_2}{|\widehat\Xi_{i,1}|}+\|\Xi_{i,2:K}\|_2\bigg|\frac{1}{\widehat\Xi_{i1}}-\frac{1}{\Xi_{i1}}\bigg|.
\end{align}
Now, we consider the first term. Using \eqref{eq:comp_eig_vect}, Theorem \ref{thm:l_infty_pertb} and \eqref{eq:s_2_c_0}, we have
\begin{align}
    \frac{\|\widehat{\Xi}_{i,2:K}O_1-\Xi_{i,2:K}\|}{|\widehat\Xi_{i,1}|} & \lesssim \frac{K^{3/2}\,(\log n)}{(1-2p_\varepsilon)\beta_n\|\theta\|_2\theta_i}
    \left(1+\frac{\theta_i}{\widetilde{\theta}}\right),
\end{align}
where $\widetilde{\theta}=\|\theta\|_2/\sqrt{n}$. This further implies
\begin{align}
   \frac{\|\widehat{\Xi}_{i,2:K}O_1-\Xi_{i,2:K}\|}{|\widehat\Xi_{i,1}|} & \lesssim   \frac{K^{3/2}\,(\log n)}{(1-2p_\varepsilon)\beta_n\sqrt{n}~\widetilde{\theta}(\widetilde{\theta} \wedge \theta_i )}.
\end{align}
Similarly, we have
\begin{align}
     \|\Xi_{i,2:K}\|\bigg|\frac{1}{\widehat\Xi_{i1}}-\frac{1}{\Xi_{i1}}\bigg| & \lesssim C\frac{\|\Xi_{i,2:K}\|}{\Xi_{i1}}\frac{|\widehat\Xi_{i1}-\Xi_{i1}|}{\widehat\Xi_{i1}}\\
    & \lesssim \frac{K^{3/2}\,(\log n)}{(1-2p_\varepsilon)\beta_n\sqrt{n}~\widetilde{\theta}(\widetilde{\theta} \wedge \theta_i )},
\end{align}
simultaneously for all $i \in S_n(c_0)$ with probability greater than $1-o(n^{-2})$. Plugging the above inequalities in \eqref{eq:prtb_r} the corollary follows.
\end{proof}

\section*{Acknowledgement}
The authors used ChatGPT 5.0 to polish certain sections of the manuscript. However, AI-produced text was not directly used in any parts of the manuscript.

\bibliographystyle{abbrvnat}
\bibliography{Private_SBM}
\end{document}